\newtheorem{invariant}{Invariant}
\newtheorem{assumption}{Assumption}
\newenvironment{lemclone}[1]{\noindent \textbf{Lemma~\ref{#1}.}\em}{\par}
\algnewcommand\algorithmicswitch{\textbf{switch}}
\algnewcommand\algorithmiccase{\textbf{case}}
\algnewcommand\algorithmicassert{\texttt{assert}}
\algnewcommand\Assert[1]{\State \algorithmicassert(#1)}%
\newcommand{\lr}[1]{\langle #1 \rangle}
\newcommand{\ignore}[1]{}
\title{Tame the Wild with Byzantine Linearizability: Reliable Broadcast, Snapshots, and Asset Transfer}
\author{Shir Cohen}{Technion, Israel}{shirco@campus.technion.ac.il}{}{}
\author{Idit Keidar}{Technion, Israel}{idish@ee.technion.ac.il}{}{}
\authorrunning{S. Cohen and I. Keidar} 
\keywords{Byzantine linearizability, concurrent algorithms, snapshot, asset transfer} 
\begin{document}

\maketitle

\begin{abstract}
We formalize Byzantine linearizability, a correctness condition that specifies whether a concurrent object with a sequential specification is resilient against Byzantine failures. Using this definition, we systematically study Byzantine-tolerant emulations of various objects from registers. We focus on three useful objects-- reliable broadcast, atomic snapshot, and asset transfer.
We prove that there exist $n$-process $f$-resilient Byzantine linearizable implementations of such objects from registers if and only if $f<\frac{n}{2}$.
\end{abstract}




    
        
        
     




\section{Introduction}

Over the last decade, cryptocurrencies have taken the world by storm. The idea of a decentralized bank, independent of personal motives has gained momentum, and cryptocurrencies like Bitcoin~\cite{nakamoto2019bitcoin}, Ethereum~\cite{wood2014ethereum}, and Diem~\cite{baudet2019state} now play a big part in the world’s economy. At the core of most of these currencies lies the asset transfer problem. In this problem, there are multiple accounts, operated by processes that wish to transfer assets between accounts.
This environment raises the need to tolerate the malicious behavior of processes that wish to sabotage the system.

In this work, we consider the shared memory model that was somewhat neglected in the Byzantine discussion.
We believe that shared memory abstractions, implemented in distributed settings, allow for an intuitive formulation of the services offered by blockchains and similar decentralized tools.
It is well-known that it is possible to implement reliable read-write shared memory registers via message passing even if a fraction of the servers are Byzantine~\cite{abraham2006byzantine, martin2002minimal,rodrigues2003rosebud,jayanti1998fault}.
As a result, as long as the client processes using the service are not malicious, any fault-tolerant object that can be constructed using registers can also be implemented in the presence of Byzantine servers.
However, it is not clear what can be done with such objects when they are used by Byzantine client processes. In this work, we study this question.

In~\Cref{sec:condition} we define \emph{Byzantine linearizability}, a correctness condition applicable to any shared memory object with a sequential specification. 
Byzantine linearizability addresses the usage of reliable shared memory abstractions by potentially Byzantine client processes.
We then systematically study the feasibility of implementing various Byzantine linearizable shared memory objects from registers.

We observe that existing Byzantine fault-tolerant shared memory constructions~\cite{liskov2005byzantine,mostefaoui2017atomic,abraham2006byzantine} in fact implement Byzantine linearizable registers. Such registers are the starting point of our study.
When trying to implement more complex objects (e.g., snapshots and asset transfer) using registers, constructions that work in the crash-failure model no longer work when Byzantine processes are involved, and new algorithms -- or impossibility results -- are needed.

As our first result, we prove in~\Cref{sec:immpossible} that an asset transfer object used by Byzantine client processes does not have a wait-free implementation, even when its API is reduced to support only transfer operations (without reading processes' balances). Furthermore, it cannot be implemented without a majority of correct processes constantly taking steps.
Asset transfer has wait-free implementations from both reliable broadcast~\cite{auvolat2020money} and snapshots~\cite{guerraoui2019consensus} (which we adapt to a Byzantine version) and thus the same lower bound applies to reliable broadcast and snapshots as well.

In~\Cref{sec:bc}, we present a Byzantine linearizable reliable broadcast algorithm with resilience $f<\frac{n}{2}$, proving that, for this object, the resilience bound is tight. To do so, we define a sequential specification of a reliable broadcast object. Briefly, the object exposes broadcast and deliver operations and we require that deliver return messages previously broadcast. We show that a Byzantine linearizable implementation of such an object satisfies the classical (message-passing) definition~\cite{cachin2011introduction}.
Finally, in~\Cref{sec:snapshot} we present a Byzantine linearizable snapshot with the same resilience.
In contrast, previous constructions of Byzantine lattice agreement, which can be directly constructed from a snapshot~\cite{attiya1992efficient}, required $3f+1$ processes to tolerate $f$ failures.

All in all, we establish a tight bound on the resilience of emulations of three useful shared memory objects from registers. On the one hand, we show that it is impossible to obtain wait-free solutions as in the non-Byzantine model, and on the other hand, unlike previous snapshot and lattice agreement algorithms, our solutions do not require $n>3f$.
Taken jointly, our results yield the following theorem:

\begin{theorem}
In the Byzantine shared memory model, there exist $n$-process $f$-resilient Byzantine linearizable implementations of reliable broadcast, snapshot, and asset transfer objects from registers if and only if $f<\frac{n}{2}$.
\end{theorem}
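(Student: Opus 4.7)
The proof is a wrap-up that combines the lower bound from \Cref{sec:immpossible} with the constructive upper bounds from \Cref{sec:bc} and \Cref{sec:snapshot}, so my plan is to verify that each of the three object types is covered in both directions.

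For the necessity direction ($f < n/2$ is required), I would start from the impossibility result of \Cref{sec:immpossible}, which rules out an $f$-resilient Byzantine linearizable asset transfer with $f \geq n/2$ even when the specification is restricted to transfers. This gives the bound directly for asset transfer. To lift it to reliable broadcast and snapshot, I would use the standard reductions: asset transfer is implementable from reliable broadcast~\cite{auvolat2020money} and from snapshot~\cite{guerraoui2019consensus}. Suppose for contradiction that one of these objects had an $f$-resilient Byzantine linearizable implementation with $f \geq n/2$; composing that implementation with the reduction would yield an $f$-resilient Byzantine linearizable asset transfer with $f \geq n/2$, contradicting the lower bound. The subtle point here, which is the main obstacle, is that the reductions in~\cite{auvolat2020money,guerraoui2019consensus} were originally analyzed under crash failures or under linearizability, so I need to argue that Byzantine linearizability of the underlying object propagates through the reduction to Byzantine linearizability of asset transfer. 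I would do this by using the fact that the reductions use the underlying object only through its sequential interface, so any legal sequential extension for the underlying object induces a legal sequential extension for asset transfer, which is exactly what Byzantine linearizability (as defined in \Cref{sec:condition}) demands.

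For the sufficiency direction ($f < n/2$ suffices), the work is already done: \Cref{sec:bc} provides an $f$-resilient Byzantine linearizable reliable broadcast from registers, and \Cref{sec:snapshot} provides one for snapshot, both at resilience $f < n/2$. For asset transfer, I would plug either of these objects into the reductions of~\cite{auvolat2020money} or~\cite{guerraoui2019consensus} to obtain an $f$-resilient Byzantine linearizable implementation at the same resilience, again invoking the argument above that the reduction preserves Byzantine linearizability.

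The proof itself is thus essentially a one-paragraph assembly, with the only non-routine step being the verification that the off-the-shelf reductions are compatible with the Byzantine linearizability notion introduced in \Cref{sec:condition}; I expect this to require no more than a few lines once the definition is unpacked, since Byzantine linearizability is preserved under sequential client code that invokes the object as a black box.
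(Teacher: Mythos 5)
Your overall decomposition is exactly the paper's: the ``only if'' direction comes from the asset-transfer lower bound of \Cref{sec:immpossible} lifted to reliable broadcast and snapshot through the known reductions, and the ``if'' direction comes from the constructions of \Cref{sec:bc} and \Cref{sec:snapshot} plus a reduction for asset transfer. The gap is in the one step you flag as ``non-routine'' and then dismiss in a few lines: the claim that Byzantine linearizability is preserved under any sequential client code that uses the underlying object as a black box is false, and the paper itself gives the counterexamples. A Byzantine process does not follow the client code of the reduction --- it invokes the underlying object's operations arbitrarily --- so a legal sequential extension of the \emph{low-level} history need not induce any legal sequential extension of the \emph{high-level} one. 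This is precisely why the paper observes that running the classic atomic-snapshot algorithm over Byzantine linearizable SWMR registers does not yield a Byzantine linearizable snapshot, and why it explicitly states that Guerraoui et al.'s reliable-broadcast-based asset transfer is \emph{not} Byzantine linearizable even when built over a Byzantine linearizable reliable broadcast. Your general principle would ``prove'' both of these false statements.

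Concretely, this breaks your sufficiency argument for asset transfer: you say you would plug the constructed objects into ``either'' reduction, but only the snapshot-based reduction of Guerraoui et al.\ works (and the paper justifies this by the specific structure of that algorithm, not by a generic composition lemma); the reliable-broadcast-based route gives at best linearizable transfers without linearizable reads, via Auvolat et al. It also weakens, though does not destroy, your necessity argument: lifting the lower bound to reliable broadcast requires that Auvolat's reduction in particular tolerates Byzantine clients (it does, being designed for the Byzantine message-passing model), and the transfer-only restriction in \Cref{theorem:majority_needed} is exactly what makes that reduction usable there --- a detail the paper is careful about and your write-up glosses over. To repair the proof, replace the blanket composition claim with the two specific facts the paper uses: the snapshot-based reduction remains Byzantine linearizable over a Byzantine linearizable snapshot, and the reliable-broadcast-based reduction yields Byzantine linearizable transfer operations, which is all the lower bound needs.
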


Although the construction of reliable registers in message passing systems requires $n>3f$ servers, our improved resilience applies to client processes, which are normally less reliable than servers, particularly in the so-called \emph{permissioned model} where servers are trusted and clients are ephemeral.

In summary, we make the following contributions:
\begin{itemize}
    \item Formalizing Byzantine linearizability for any object with a sequential specification.
    
    \item Proving that some of the most useful building blocks in distributed computing, such as atomic snapshot and reliable broadcast, do not have $f$-resilient implementations from SWMR registers when $f\geq\frac{n}{2}$ processes are Byzantine.
    
    \item Presenting Byzantine linearizable implementations of a reliable broadcast object and a snapshot object with the optimal resilience.
\end{itemize}

\section{Related Work}
\label{sec:related}

In~\cite{aguilera2019impact} Aguilera et al. present a non-equivocating broadcast algorithm in shared memory. This broadcast primitive is weaker than reliable broadcast -- it does not guarantee that all correct processes deliver the same messages, but rather that they do not deliver conflicting messages.
A newer version of their work~\cite{aguilera2021impact}, developed concurrently and independently of our work\footnote{Their work~\cite{aguilera2021impact} was in fact published shortly after the initial publication of our results~\cite{cohen2021tame}.}, also implements reliable broadcast with $n\geq 2f+1$, which is very similar to our implementation.
While the focus of their work is in the context of RDMA in the \emph{M\&M} (message--and--memory) model, our work focuses on the classical shared memory model, which can be emulated in classical message passing systems. While the algorithms are similar, we formulate reliable broadcast as a shared memory object, with designated API method signatures, which allows us to reason about the operation interval as needed for proving (Byzantine) linearizability and for using this object in constructions of other shared memory objects.

Given a reliable broadcast object, there are known implementations of lattice agreement~\cite{di2020byzantine,DBLP:conf/opodis/ZhengG20}, which resembles a snapshot object.
However, these constructions require $n=3f+1$ processes. 
In our work, we present both Byzantine linearizable reliable broadcast and Byzantine snapshot, (from which Byzantine lattice agreement can be constructed~\cite{attiya1992efficient}), with resilience $n=2f+1$.

The asset transfer object we discuss in this paper was introduced by Guerraoui et al.~\cite{guerraoui2019consensus,collins2020online}. 
Their work provides a formalization of the cryptocurrency definition~\cite{nakamoto2019bitcoin}.
The highlight of their work is the observation that the asset transfer problem can be solved without consensus. It is enough to maintain a partial order of transactions in the systems, and in particular, every process can record its own transactions.
They present a wait-free linearizable implementation of asset transfer in crash-failure shared memory, taking advantage of an atomic snapshot object. We show that we can use their solution, together with our Byzantine snapshot, to solve Byzantine linearizable asset transfer with $n=2f+1$.

In addition, Guerraoui et al. present a Byzantine-tolerant solution in the message passing model. This algorithm utilizes reliable broadcast, where dependencies of transactions are explicitly broadcast along with the transactions. This solution does not translate to a Byzantine linearizable one, but rather to a sequentially consistent asset transfer object. In particular, reads can return old (superseded) values, and transfers may fail due to outdated balance reads.

Finally, recent work by Auvolat et al.~\cite{auvolat2020money} continues this line of work.
They show that a FIFO order property between each pair of processes is sufficient in order to solve the asset transfer problem. This is because transfer operations can be executed once a process's balance becomes sufficient to perform a transaction and there is no need to wait for all causally preceding transactions.
However, as a result, their algorithm is not sequentially consistent, or even causally consistent for that matter. For example, assume process $i$ maintains an invariant that its balance is always at least 10, and performs a transfer with amount 5 after another process deposits 5 into its account, increasing its balance to 15. Using the protocol in~\cite{auvolat2020money}, another process might observe $i$'s balance as 5 if it sees $i$'s outgoing transfer before the causally preceding deposit.
Because our solution is Byzantine linearizable, such anomalies are prevented.
\section{Model and Preliminaries}
\label{sec:model}

We study a distributed system in the shared memory model.
Our system consists of a well-known static set $\Pi=\{1,\dots,n\}$ of asynchronous client processes.
These processes have access to some shared memory objects.
In the shared memory model, all communication between processes is done through the API exposed by the objects in the system: processes invoke operations that in turn, return some response to the process.
In this work, we assume a reliable shared memory. (Previous works have presented constructions of such reliable shared memory in the message passing model~\cite{abraham2006byzantine, martin2002minimal,rodrigues2003rosebud,afek1995computing,jayanti1998fault}).
We further assume an adversary that may adaptively corrupt up to $f$ processes in the course of a run.
When the adversary corrupts a process, it is defined as \emph{Byzantine} and may deviate arbitrarily from the protocol. As long as a process is not corrupted by the adversary, it is \emph{correct}, follows the protocol, and takes infinitely many steps. In particular, it continues to invoke the object's API infinitely often. Later in the paper, we show that the latter assumption is necessary.

We enrich the model with a \emph{public key infrastructure} (PKI). That is, every process is equipped with a public-private key pair used to sign data and verify signatures of other processes. We denote a value $v$ signed by process $i$ as $\lr{v}_i$.


{\bf Executions and Histories.}
We discuss algorithms emulating some object $O$ from lower level objects (e.g., registers). An  algorithm is organized as methods of $O$. A method execution is a sequence of \emph{steps}, beginning with the method’s invocation (invoke step), proceeding through steps that access lower level objects (e.g., register read/write), and ending with a return step. The invocation and response delineate the method’s execution interval.
In an \emph{execution} $\sigma$ of a Byzantine shared memory algorithm, each correct process invokes methods sequentially, where steps of different processes are interleaved.
Byzantine processes take arbitrary steps regardless of the protocol.
The \emph{history} $H$ of an execution $\sigma$ is the sequence of high-level invocation and response events of the emulated object $O$ in $\sigma$.

A \emph{sub-history} of a history $H$ is a sub-sequence of the events of $H$.
A history $H$ is \emph{sequential} if it begins with an invocation and each invocation, except possibly the last, is immediately followed by a matching response.
Operation $op$ is pending in a history $H$ if $op$ is invoked in $H$ but does not have a matching response event.

A history defines a partial order on operations: operation $op_1$ precedes $op_2$ in history $H$, denoted $op_1\prec _H op_2$, if the response event of $op_1$ precedes the invocation event of $op_2$ in $H$. Two operations are concurrent if neither precedes the other.

{\bf Linearizability.}
A popular correctness condition for concurrent objects in the crash-fault model is linearizability~\cite{herlihy1990linearizability}, which is defined with respect to an object's sequential specification.
A \emph{linearization} of a concurrent history
$H$ of object $o$ is a sequential history $H'$
such that (1) after removing some pending operations from $H$ and completing others by adding matching responses, it contains the same invocations and responses as $H'$, (2) $H'$ preserves the partial order $\prec_H$, and (3) $H'$ satisfies $o$'s sequential specification.

{\bf f-resilient.}
An algorithm is \emph{f-resilient} if as long as at most $f$ processes fail, every correct process eventually returns from each operation it invokes. A \emph{wait-free} algorithm is a special case where $f=n-1$.

{\bf Single Writer Multiple Readers Register.}
The basic building block in shared memory is a single writer multiple readers (SWMR) register that exposes \emph{read} and \emph{write} operations. Such registers are used to construct more complicated objects.
The sequential specification of a SWMR register states that every read operation from register $R$ returns the value last written to $R$.
Note that if the writer is Byzantine, it can cause a correct reader to read arbitrary values.

{\bf Asset Transfer Object.}
In~\cite{guerraoui2019consensus,collins2020online}, the asset transfer problem is formulated as a sequential object type, called \emph{Asset Transfer Object}. The asset transfer object maintains a mapping from processes in the system to their balances\footnote{The definition in~\cite{guerraoui2019consensus} allows processes to own multiple accounts. For simplicity, we assume a single account per-process, as in \cite{collins2020online}.}. Initially, the mapping contains the initial balances of all processes. The object exposes a \emph{transfer} operation, \emph{transfer(src,dst,amount)}, which can be invoked by process \emph{src} (only). It withdraws \emph{amount} from process \emph{src}'s account and deposits it at process \emph{dst}'s account provided that \emph{src}'s balance was at least \emph{amount}. It returns a boolean that states whether the transfer was successful (i.e., \emph{src} had \emph{amount} to spend).
In addition, the object exposes a \emph{read(i)} operation that returns the current balance of $i$.

\section{Byzantine Linearizability}
\label{sec:condition}
In this section we define Byzantine linearizability.
Intuitively, we would like to tame the Byzantine behavior in a way that provides consistency to correct processes.
We linearize the correct processes' operations and offer a degree of freedom to embed additional operations by Byzantine processes.

We denote by $H|_{correct}$ the projection of a history $H$ to all correct processes.
We say that a history $H$ is Byzantine linearizable if $H|_{correct}$ can be augmented with operations of Byzantine processes such that the completed history is linearizable. That is, there is another history, with the same operations by correct processes as in $H$, and additional operations by another at most $f$ processes.
In particular, if there are no Byzantine failures then Byzantine linearizability is simply linearizability.
Formally:

\begin{definition}{(Byzantine Linearizability)}
\label{def:byzlin}
A history $H$ is Byzantine linearizable
if there exists a history $H'$ so that $H'|_{correct}=H|_{correct}$ and $H'$ is linearizable.

\end{definition}

Similarly to linearizability, we say that an object is Byzantine linearizable if all of its executions are Byzantine Linearizable.

Next, we characterize objects for which Byzantine linearizability is meaningful. The most fundamental component in shared memory is read-write registers. Not surprisingly, such registers, whether they are single-writer or multi-writers ones are de facto Byzantine linearizable without any changes. This is because before every read from a Byzantine register, invoked by a correct process, one can add a corresponding Byzantine write.

In practice, multiple writers multiple readers (MWMR) registers are useless in a Byzantine environment as an adversary that controls the scheduler can prevent any communication between correct processes. 
SWMR registers, however, are still useful for constructing more meaningful objects. Nevertheless, the constructions used in the crash-failure model for linearizable objects do not preserve this property.
For instance, if we allow Byzantine processes to run a classic atomic snapshot algorithm~\cite{afek1993atomic} using Byzantine linearizable SWMR registers, it will not result in a Byzantine linearizable snapshot object. 
The reason is that the algorithm relies on correct processes being able to perform ``double-collect'' meaning that at some point a correct process manages to read all registers twice without witnessing any changes. While this is true in the crash-failure model, in the Byzantine model this is not the case as the adversary can change some registers just before any correct read.

\paragraph*{Relationship to Other Correctness Conditions}
Byzantine linearizability provides a simple and intuitive way to capture Byzantine behavior in the shared memory model.
We now examine the relationship of Byzantine linearizability with previously suggested correctness conditions involving Byzantine processes.

PBFT~\cite{castro1999practical,castro1999correctness} presented a formalization of linearizability in the presence of Byzantine-faulty clients in message passing systems. Their notion of linearizability is formulated in the form of I/O automata. Their specification is in the same spirit as ours, but our formulation is closer to the original notion of linearizability in shared memory.

Some works have defined linearization conditions for specific objects.
This includes conditions for SWMR registers~\cite{mostefaoui2017atomic}, a distributed ledger~\cite{cholvi2020atomic}, and asset transfer~\cite{auvolat2020money}.
Our condition coincides with these definitions for the specific objects and thus generalizes all of them.
Liskov and Rodrigues~\cite{liskov2005byzantine} presented a correctness condition that has additional restrictions. 
Their correctness notion relies on the idea that Byzantine processes are eventually detected and removed from the system and focuses on converging to correct system behavior after their departure. While this model is a good fit when the threat model is software bugs or malicious intrusions, it is less appropriate for settings like cryptocurrencies, where Byzantine behavior cannot be expected to eventually stop.

\section{Lower Bound on Resilience}
\label{sec:immpossible}

In shared memory, one typically aims for wait-free objects, which tolerate any number of process failures. Indeed, many useful objects have wait-free implementations from SWMR registers in the non-Byzantine case. This includes reliable broadcast, snapshots, and as recently shown, also asset transfer. We now show that in the Byzantine case, wait-free implementations of these objects are impossible. Moreover, a majority of correct processes is required.

\begin{theorem}
\label{theorem:majority_needed}
In the Byzantine shared memory model, for any $f>2$, there does not exist a Byzantine linearizable implementation of asset transfer that supports only transfer operations in a system with $n\leq 2f$ processes, $f$ of which can be Byzantine, using only SWMR registers.
\end{theorem}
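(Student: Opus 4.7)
My plan is a partition argument. Since $n \leq 2f$, split $\Pi$ into two disjoint sets $A$ and $B$ with $|A|,|B|\leq f$. Fix a distinguished process $p \in A$ with initial balance $1$, recipients $q_A \in A \setminus \{p\}$ and $q_B \in B$ with initial balance $0$, and witness targets $r_A,r_B$. I construct three executions. In $\alpha_1$, every process in $B$ is Byzantine and takes no steps while $A$ is correct: $p$ invokes $\mathit{transfer}(p,q_A,1)$ followed by $q_A$ invoking $\mathit{transfer}(q_A,r_A,1)$. By $f$-resilience these operations return, and Byzantine linearizability forces both to return success (the sequence $p\!\to\!q_A,\; q_A\!\to\!r_A$ is a valid linearization). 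In $\alpha_2$, the entire set $A$ is Byzantine and $B$ is correct: Byzantine $p$ performs exactly the register writes of a correct execution of $\mathit{transfer}(p,q_B,1)$ while the processes in $A \setminus \{p\}$ take no steps. Since $B$'s view in $\alpha_2$ is identical to its view in the honest execution with $p$ correct and $A \setminus \{p\}$ Byzantine silent---where Byzantine linearizability forces $\mathit{transfer}(p,q_B,1)$ to succeed and hence $\mathit{transfer}(q_B,r_B,1)$ to succeed as well---the same must hold in $\alpha_2$.

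I then build the critical execution $\alpha_3$, in which only $p$ is Byzantine. The schedule proceeds in two phases: first $p$ performs its $\alpha_2$ writes while $B$ runs its steps of $\mathit{transfer}(q_B,r_B,1)$ and $A \setminus \{p\}$ is delayed; afterwards $p$ overwrites its register with its $\alpha_1$ values while $A \setminus \{p\}$ runs its $\alpha_1$ steps (in particular $\mathit{transfer}(q_A,r_A,1)$) and $B$ is delayed. In the first phase $B$'s view matches $\alpha_2$: $p$'s register holds the $\alpha_2$ values, the registers of $A \setminus \{p\}$ are still at their initial contents, and $B$'s own registers evolve exactly as in $\alpha_2$; hence $q_B$'s transfer returns success. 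In the second phase $A \setminus \{p\}$'s view matches $\alpha_1$: $p$'s register now holds the $\alpha_1$ values and $A \setminus \{p\}$'s own registers match $\alpha_1$, and the schedule places any read of a $B$-register by $A \setminus \{p\}$ at a point where $B$ is paused so that the contents read match the silent-$B$ reads of $\alpha_1$; hence $q_A$'s transfer returns success.

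The contradiction is immediate: any augmentation of $\alpha_3|_{correct}$ must linearize both $\mathit{transfer}(q_A,r_A,1)$ and $\mathit{transfer}(q_B,r_B,1)$ as successes, so at their linearization points $q_A$ and $q_B$ each need balance at least $1$; since they start with balance $0$ and no correct operation transfers to them, the augmentation must insert successful transfers from the sole Byzantine process $p$ to each of $q_A$ and $q_B$, forcing $p$ to spend $2$ units against an initial balance of $1$---impossible. I expect the main obstacle to be formalizing the second-phase indistinguishability: one must argue that $f$-resilience merely requires eventual termination and therefore permits the adversary to delay $B$'s visible writes long enough that $A \setminus \{p\}$'s reads of $B$-registers during its $\alpha_1$-mimicking steps observe exactly what they observe in $\alpha_1$, and symmetrically for the first phase. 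This is a standard but delicate scheduling argument layered on top of the partition construction.
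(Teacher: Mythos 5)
Your overall strategy (partition the processes, force a Byzantine double-spend, and derive a contradiction from two downstream transfers by correct processes that must both succeed) matches the paper's, and your opening moves ($\alpha_1$, $\alpha_2$, and the final balance-accounting contradiction) are sound. But the gap is exactly where you flag it, and it is not a ``delicate scheduling argument'' that can be patched --- it breaks this version of the construction. In $\alpha_3$ only $p$ is Byzantine, so $q_B$ and the rest of $B$ are correct; for $\mathit{transfer}(q_B,r_B,1)$ to return success in phase one, $q_B$ must complete all of its steps, including writes to its own SWMR registers. Those writes are persistent: pausing $B$ in phase two does not revert its registers, so when $A\setminus\{p\}$ reads $B$'s registers during phase two it sees the traces of $q_B$'s transfer rather than the initial values it saw in $\alpha_1$. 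Since the theorem must hold for an arbitrary implementation, you cannot assume $A\setminus\{p\}$ never reads those registers; an algorithm that cross-checks all registers would detect the conflict and, say, fail $q_A$'s transfer, and your indistinguishability collapses. Delaying steps cannot help, because a step once taken is visible, and deferring $q_B$'s completion until after phase two merely moves the same problem to $B$'s side of the argument.

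The paper escapes this by making an entire side Byzantine. It partitions $\Pi=A\cup B\cup\{p_1,p_2\}$ with $|A|=|B|=f-1$, runs the two conflicting sub-executions one after the other with \emph{only} $A\cup\{p_1\}$ (resp.\ $A\cup\{p_2\}$) taking steps, and then has that coalition of exactly $f$ Byzantine processes \emph{rewrite their own registers} back to earlier states, erasing the traces of one sub-execution; the other side ($B$) takes no steps at all until the memory has been doctored, so there is nothing on its side to erase, and the two doctored runs $\sigma_3$ and $\sigma_4$ end with literally identical shared memory and identical (empty) local histories for $B$. This is where $n\le 2f$ is genuinely used: the Byzantine budget must be large enough to cover every process that took a step in one of the two sub-executions. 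To repair your proof you would need to adopt this register-reset mechanism and charge the whole of $A\cup\{p_1\}$ (not just $p$) to the adversary.
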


Note that to prove this impossibility, it does not suffice to introduce bogus actions by Byzantine processes, because the notion of Byzantine linearizability allows us to ignore these actions. Rather, to derive the contradiction, we create runs where the bogus behavior of the Byzantine processes leads to incorrect behavior of the correct processes.

\begin{proof}
Assume by contradiction that there is such an algorithm. Let us look at a system with $n=2f$ correct processes.
Partition $\Pi$ as follows: $\Pi=A\cup B\cup \{p_1,p_2\}$, where $|A|=f-1$, $|B|=f-1$, $A\cap B=\emptyset$, and $p_1,p_2\notin A\cup B$. By assumption, $|A|>1$. Let $z$ be a process in $A$. Also, by assumption $|B|\geq 2$. Let $q_1,q_2$ be processes in $B$.
The initial balance of all processes but $z$ is $0$, and the initial balance of $z$ is $1$.
We construct four executions as shown in~\Cref{fig:impossibility}.

Let $\sigma_1$ be an execution where, only processes in $A\cup\{p_1\}$ take steps. First, $z$ performs \emph{transfer($z,p_1$,1)}.
Since up to $f$ processes may be faulty, the operation completes, and by the object's sequential specification, it is successful (returns true). 
Then, $p_1$ performs \emph{transfer($p_1,q_1$,1)}. 
By $f$-resilience and linearizability, this operation also completes successfully.
Note that in $\sigma_1$ no process is actually faulty, but because of $f$-resilience, progress is achieved when $f$ processes are silent.

Similarly, let $\sigma_2$ be an execution where the processes in $A\cup\{p_2\}$ are correct, and $z$ performs \emph{transfer($z,p_2$,1)}, followed by \emph{$p_2$} performing \emph{transfer($p_2,q_2$,1)}.

We now construct $\sigma_3$, where all processes in $A\cup\{p_1\}$ are Byzantine. 
We first run $\sigma_1$. Call the time when it ends $t_1$. At this point, all processes in $A\cup\{p_1\}$ restore their registers to their initial states. Note that no other processes took steps during $\sigma_1$, hence the entire shared memory is now in its initial state.
Then, we execute $\sigma_2$. Because we have reset the memory to its initial state, the operations execute the same way.  
When $\sigma_2$ completes, processes in \emph{A}$\setminus\{z\}\cup\{p_1\}$ restore their registers to their state at time $t_1$. 
At this point, the state of $z$ and $p_2$ is the same as it was at the end of $\sigma_2$, the state of processes in $A\setminus\{z\}\cup\{p_1\}$ is the same as it was at the end of $\sigma_1$, and processes in $B$ are all in their initial states.

We construct $\sigma_4$ where all processes in $A\cup\{p_2\}$ are Byzantine
by executing $\sigma_2$, having all processes in $A\cup\{p_2\}$ reset their memory, executing $\sigma_1$, and then having $z$ and $p_2$ restore their registers to their state at the end of $\sigma_2$.
At this point, the state of $z$ and $p_2$ is the same as it was at the end of $\sigma_2$, the state of processes in $A\setminus\{z\}\cup\{p_1\}$ is the same as it was at the end of $\sigma_1$, and processes in $B$ are all in their initial states.

We observe that for processes in \emph{B}, the configurations at the end of $\sigma_3$ and $\sigma_4$ are indistinguishable as they did not take any steps and the global memory is the same.
By $f$-resilience, in both cases $q_1$ and $q_2$, together with processes in \emph{B} and one of $\{p_1,p_2\}$ should be able to make progress at the end of each of these runs.
We extend the runs by having $q_1$ and $q_2$ invoke transfers of amount 1 to each other.
In both runs processes in $B\cup\{p_1,p_2\}$ help them make progress. In $\sigma_3$, $p_1$ behaves as if it is a correct process and its local state is the same as it is at the end of $\sigma_1$, and in $\sigma_4$ $p_2$ behaves as if it is a correct process and its local state is the same as it is at the end of $\sigma_2$. Thus, $\sigma_3$ and $\sigma_4$ are indistinguishable to all correct processes, and as a result $q_1$ and $q_2$ act the same in both runs. 
However, from safety exactly one of their transfers should succeed.
In $\sigma_3$, $p_2$ is correct and
\emph{transfer($p_2,q_2$,1)} succeeds, allowing $q_2$ to transfer 1 and disallowing the transfer from $q_1$, whereas $\sigma_4$ the opposite is true.
This is a contradiction.
\end{proof}

\begin{figure}
    \centering
    \includegraphics[scale=0.75]{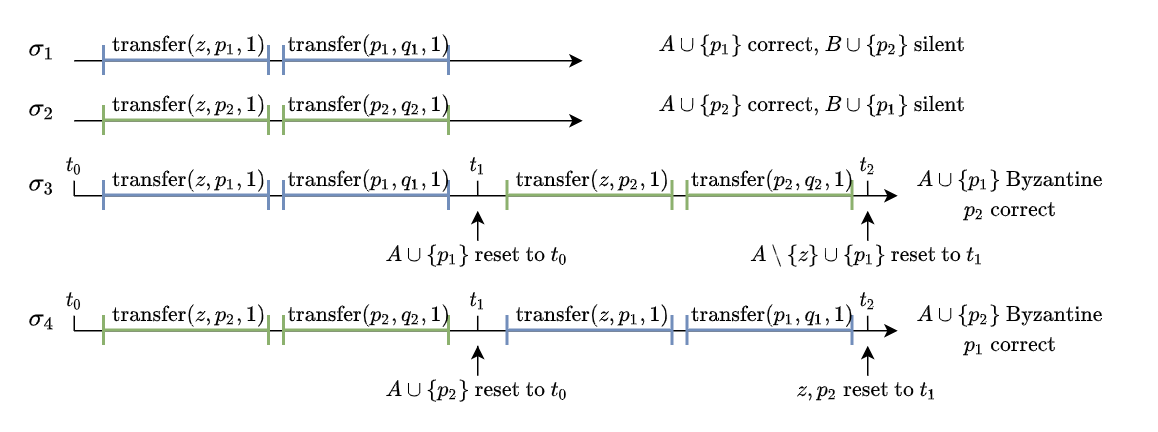}
    \caption{An asset transfer object does not have an $f$-resilient implementation for $n\leq 2f$.}
    \label{fig:impossibility}
\end{figure}

Guerraoui et al.~\cite{guerraoui2019consensus} use an atomic snapshot to implement an asset transfer object in the crash-fault shared memory model. 
In addition, they handle Byzantine processes in the message passing model by taking advantage of reliable broadcast. In~\Cref{at_appendix} we show that their atomic snapshot-based asset transfer can be easily adapted to the Byzantine settings by using a Byzantine linearizable snapshot, resulting in a Byzantine linearizable asset transfer.
Their reliable broadcast-based algorithm, on the other hand, is not linearizable and therefore not Byzantine linearizable even when using Byzantine linearizable reliable broadcast.
Nonetheless, Auvolat et al.~\cite{auvolat2020money} have used reliable broadcast to construct an asset transfer object where transfer operations are linearizable (although reads are not).

We note that our lower bound holds for an asset transfer object without read operations.
This discussion and the construction in~\Cref{at_appendix} lead us to the following corollary:

\begin{corollary}
In the Byzantine shared memory model, for any $f>2$, there does not exist an $f$-resilient Byzantine linearizable implementation of an atomic snapshot or reliable broadcast in a system with $f\geq\frac{n}{2}$ Byzantine processes using only SWMR registers.
\end{corollary}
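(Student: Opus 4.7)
The plan is to derive a contradiction with Theorem~\ref{theorem:majority_needed} via a reduction: if either an atomic snapshot or a reliable broadcast object admitted an $f$-resilient Byzantine linearizable implementation from SWMR registers with $f \geq n/2$, then by composing it with a known asset-transfer construction we would obtain an $f$-resilient Byzantine linearizable asset transfer (restricted to transfer operations) built from SWMR registers, which Theorem~\ref{theorem:majority_needed} forbids.

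For the snapshot case, I would invoke the construction of Guerraoui et al.~\cite{guerraoui2019consensus}, which implements a wait-free linearizable asset transfer object on top of an atomic snapshot. Because the outer construction is wait-free in its use of the snapshot, substituting a Byzantine linearizable snapshot with resilience $f$ yields an asset transfer object whose resilience is bottlenecked precisely by the snapshot, and whose Byzantine linearizability follows from the linearizability of the outer algorithm together with the Byzantine linearizability of the base (as already observed in the paragraph preceding the corollary). The resulting asset transfer uses only SWMR registers and tolerates $f \geq n/2$ Byzantine processes, contradicting Theorem~\ref{theorem:majority_needed}.

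For the reliable broadcast case, I would analogously use the construction of Auvolat et al.~\cite{auvolat2020money}, which builds an asset transfer object on top of reliable broadcast in which the \emph{transfer} operations are linearizable (reads may return stale balances, but Theorem~\ref{theorem:majority_needed} only involves transfer operations, so this suffices). Plugging in a hypothetical $f$-resilient Byzantine linearizable reliable broadcast object from SWMR registers again produces an $f$-resilient Byzantine linearizable transfer-only asset transfer from SWMR registers, contradicting Theorem~\ref{theorem:majority_needed}.

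The only point that requires care is ensuring that the composition really preserves Byzantine linearizability under $f \geq n/2$ corruptions: I need to note that any Byzantine action a corrupt process can exhibit at the asset-transfer layer corresponds to some Byzantine invocation pattern at the snapshot or reliable-broadcast layer, so the base object's Byzantine linearizability still applies. Once this is in place, both halves of the corollary follow immediately from Theorem~\ref{theorem:majority_needed}.
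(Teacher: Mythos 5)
Your proposal is correct and follows essentially the same route as the paper: the paper justifies this corollary exactly by the reduction to Theorem~\ref{theorem:majority_needed} via the snapshot-based asset-transfer construction of Guerraoui et al.\ and the reliable-broadcast-based construction of Auvolat et al., including the same observation that only transfer operations need to be linearizable for the lower bound to apply. No gaps relative to the paper's own (informal) argument.
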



Furthermore, we prove in the following lemma that in order to provide $f$-resilience it is required that at least a majority of correct processes take steps infinitely often, justifying our model definition.

\begin{lemma}
\label{cor:infoften}
In the Byzantine shared memory model, for any $f>2$, there does not exist an $f$-resilient Byzantine linearizable implementation of asset transfer in a system with $n\geq 2f+1$ processes, $f$ of which can be Byzantine, using only SWMR registers if less than $f+1$ correct processes take steps infinitely often.
\end{lemma}
\begin{proof}
Assume by way of contradiction that there exists an $f$-resilient Byzantine linearizable implementation of asset transfer in a system with $n\geq 2f+1$ processes where there are at most $f$ correct processes that take steps infinitely often. Denote these $f$ correct processes by the set $A$.
Thus, there is a point $t$ in any execution such that from time $t$, only processes in $A$ and Byzantine processes take any steps.
Starting $t$, the implementation is equivalent to one in a system with $n=2f$, $f$ of them may be Byzantine. This is a contradiction to~\Cref{theorem:majority_needed}.
\end{proof}

\section{Byzantine Linearizable Reliable Broadcast}
\label{sec:bc}

With the acknowledgment that not all is possible, we seek to find Byzantine linearizable objects that are useful even without a wait-free implementation.
One of the practical objects is a reliable broadcast object. We already proved in the previous section that it does not have an $f$-resilient Byzantine linearizable implementation, for any $f\geq max\{3, \frac{n}{2}\}$. In this section we provide an implementation that tolerates $f<\frac{n}{2}$ faults.

\subsection{Reliable Broadcast Object}

The reliable broadcast primitive exposes two operations \emph{broadcast(ts,m)} returning void and \emph{deliver(j,ts)} returning $m$.
When $deliver_j(i, ts)$ returns $m$ we say that process $j$ delivers $m$ from process $i$ in timestamp $ts$. 
The broadcast operation allows processes to spread a message $m$ in the system, along with some timestamp $ts$.
The use of timestamps allows processes to broadcast multiple messages.

Its classical definition, given for message passing systems~\cite{cachin2011introduction}, requires the following properties:
\begin{itemize}
    \item Validity: If a correct process $i$ broadcasts $(ts,m)$ then all correct processes eventually deliver $m$ from process $i$ in timestamp $ts$.
    
    \item Agreement: If a correct process delivers $m$ from process $i$ in timestamp $ts$, then all correct processes eventually deliver $m$ from process $i$ in timestamp $ts$.

    \item Integrity: No process delivers two different messages for the same $(ts,j)$ and if $j$ is correct delivers only messages $j$ previously broadcast.

\end{itemize}

In the shared memory model, 
the deliver operation for some process $j$ and timestamp $ts$ returns the message with timestamp $ts$ previously broadcast by $j$, if exists. We define the sequential specification of reliable broadcast as follows:

\begin{definition}
\label{def:bc}
A reliable broadcast object exposes two operations \emph{broadcast(ts,m)} returning void and \emph{deliver(j,ts)} returning m.
A call to \emph{deliver(j,ts)} returns the value m of the first \emph{broadcast(ts,m)} invoked by process $j$ before the deliver operation.
If $j$ did not invoke \emph{broadcast} before the deliver, then it returns $\bot$.
\end{definition}

Note that as the definition above refers to sequential histories, the first broadcast operation (if such exists) is well-defined. Further,
whereas in message passing systems reliable broadcast works in a push fashion, where the receipt of a message triggers action at its destination, in the shared memory model processes need to actively pull information from the registers.
A process pulls from another process $j$ using the \emph{deliver(j,ts)} operation and returns with a value $m\neq\bot$.
If all messages are eventually pulled, the reliable broadcast properties are achieved, as proven in the following lemma.

\begin{lemma}
\label{app:rb_prop}
A Byzantine linearization of a reliable broadcast object satisfies the three properties of reliable broadcast.
\end{lemma}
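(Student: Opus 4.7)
The plan is to fix a history $H$ together with a Byzantine linearization $H'$ and a concrete linearization order $L$ of $H'$ that respects the partial order $\prec_{H'}$ and satisfies the sequential specification of \Cref{def:bc}. I will then verify the three properties by reasoning about where the relevant invocations/responses sit in $L$, relying throughout on two facts supplied by the model and the definition of Byzantine linearizability: (i) for every correct process $p$, the events $H|_p$ are exactly the events $H'|_p$, so Byzantine additions cannot forge broadcasts attributed to $p$; and (ii) correct processes keep invoking the API infinitely often, which in the broadcast setting I take to mean that each correct process repeatedly issues \emph{deliver}$(j,ts)$ for every pending $(j,ts)$ it cares about (the ``pulling'' assumption motivated just before the lemma).

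For \textbf{Integrity}, the sequential specification pins down every return value of \emph{deliver}$(j,ts)$ to a single value, namely the payload $v$ of the $\prec_L$-earliest \emph{broadcast}$(ts,v)$ by $j$, or $\bot$ when no such broadcast precedes it in $L$; hence no two invocations of \emph{deliver}$(j,ts)$ can return distinct non-$\bot$ messages. When $j$ is correct, property (i) guarantees that the only \emph{broadcast} operations attributed to $j$ in $H'$ are those $j$ truly invoked, so any non-$\bot$ value returned by \emph{deliver}$(j,ts)$ is indeed a message $j$ previously broadcast.

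For \textbf{Validity}, suppose a correct $p$ completes \emph{broadcast}$(ts,m)$ at real time $t$. By (i) this operation appears in $H'$ and therefore in $L$. For an arbitrary correct $q$, the pulling assumption yields an invocation of \emph{deliver}$(p,ts)$ by $q$ that starts after $t$; its invocation follows $p$'s broadcast response in $\prec_H$, so $L$ places it after \emph{broadcast}$(ts,m)$. Because a correct $p$ issues at most one broadcast per timestamp, that broadcast is the $\prec_L$-earliest one by $p$ with timestamp $ts$, so the specification forces the deliver to return $m$. For \textbf{Agreement}, if a correct $q_1$ returns $m$ from \emph{deliver}$(p,ts)$, then by the spec the $\prec_L$-earliest \emph{broadcast}$(ts,\cdot)$ by $p$ in $L$ carries payload $m$ and occurs before $q_1$'s deliver. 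Any other correct $q_2$ eventually invokes \emph{deliver}$(p,ts)$ after $q_1$'s response (again by the pulling assumption combined with the real-time partial order), so its deliver sits after the same first broadcast in $L$ and must also return $m$.

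The one subtle point, and the main obstacle, is bridging the model's bare assumption that correct processes ``take infinitely many steps and continue to invoke the object's API infinitely often'' to the statement that each correct process repeatedly probes every $(j,ts)$ of interest; I would make this explicit at the top of the proof so that the existence of a post-$t$ \emph{deliver}$(p,ts)$ invocation by every correct $q$ is unambiguous. Everything else is a direct read-off from the sequential specification once the linearization $L$ is fixed, and the ``first broadcast'' clause in \Cref{def:bc} is exactly what neutralizes the possibility of a Byzantine sender issuing many conflicting \emph{broadcast}$(ts,\cdot)$ operations.
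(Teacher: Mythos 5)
Your proposal is correct and follows essentially the same route as the paper's own proof: integrity is read off directly from the sequential specification, while validity and agreement combine the ``all messages are eventually pulled'' assumption with the uniqueness of the first \emph{broadcast}$(ts,\cdot)$ by $j$ in the linearization. Your version is somewhat more careful than the paper's (explicitly fixing the linearization $L$, invoking the real-time order to place the later delivers after the first broadcast, and flagging the need to make the pulling assumption precise), but these are refinements of the same argument rather than a different approach.
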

\begin{proof}
If a correct process broadcasts $m$, and all messages are subsequently pulled then according to~\Cref{def:bc} all correct processes deliver $m$, providing validity. For agreement, if a correct process invokes \emph{deliver(j,ts)} that returns $m$ and all messages are later pulled by all correct processes, it follows that all correct processes also invoke \emph{deliver(j,ts)} and eventually return $m'\neq \bot$.
Since \emph{deliver(j,ts)} returns the value $v$ of the first \emph{broadcast(ts,v)} invoked by process $j$ before it is called, and there is only one first broadcast, and we get that $m=m'$.
Lastly, if \emph{deliver(j,ts)} returns $m$, by the specification, $j$ previously invoked \emph{broadcast(ts,m)}.

\end{proof}

\subsection{Reliable Broadcast Algorithm}

In our implementation (given in~\Cref{alg:bc}), each process has 4 SWMR registers: send, echo, ready, and deliver, to which we refer as \emph{stages} of the broadcast. We follow concepts from Bracha's implementation in the message passing model~\cite{bracha1987asynchronous} but leverage the shared memory to improve its resilience from $3f+1$ to $2f+1$. The basic idea is that a process that wishes to broadcast value $v$ writes it in its send register (line \ref{l:add_to_send}) and returns only when it reaches the deliver stage. I.e., $v$ appears in the deliver register of at least one correct process.
Throughout the run, processes infinitely often call a \emph{refresh} function whose role is to help the progress of the system. When refreshing, processes read all registers and help promote broadcast values through the 4 stages. For a value to be delivered, it has to have been read and signed by $f+1$ processes at the ready stage. Because each broadcast message is copied to 4 registers of each process, the space complexity is $4n$ per message. Whether this complexity can be improved remains as an open question.

In the refresh function, executed for all processes, at first a process reads the last value written to a send register (line~\ref{l:get_val}).
If the value is a signed pair of a message and a timestamp, refresh then copies it to the process’s echo register in line~\ref{l:add_to_echo}.
In the echo register, the value remains as evidence, preventing conflicting values (sent by Byzantine processes) from being delivered. That is,
before promoting a value to the ready or deliver stage, a correct process $i$ performs a ``double-collect'' of the echo registers (in lines~\ref{line:conf1},\ref{line:conf2}). Namely, after collecting $f+1$ signatures on a value in ready registers, meaning that it was previously written in the echo of at least one correct process, $i$ re-reads all echo registers to verify that there does not exist a conflicting value (with the same timestamp and sender). Using this method, concurrent deliver operations ``see'' each other, and delivery of conflicting values broadcast by a Byzantine process is prevented.
Before delivering a value, a process writes it to its deliver register with $f+1$ signatures (line~\ref{line:add_to_deliver}). Once one correct process delivers a value, the following deliver calls can witness the $f+1$ signatures and copy this value directly from its deliver register (line~\ref{line:copy_deliver}).

\def\NoNumber#1{{\def\alglinenumber##1{}\State #1}\addtocounter{ALG@line}{-1}}

\begin{algorithm}[htb]

    \caption{Shared Memory Bracha: code for process $i$}
    \label{alg:bc}

    \begin{algorithmic}[1]
    
    \Statex shared SWMR registers: $send_i, echo_i, ready_i, deliver_i$
    \Statex

    \Procedure{conflicting-echo}{$\lr{ts,v}_j$}
        \State return $\exists w\neq v, k\in \Pi$ such that $\lr{ts,w}_j\in echo_k$
    \EndProcedure
    \Statex
    
    \Procedure{broadcast}{ts,val}

        \State $send_i\gets \lr{ts,val}_i$ \label{l:add_to_send}

        \Repeat
            \State $m\gets$ deliver(i,ts)\label{line:call_deliver}
            \Until $m\neq\bot$\Comment{message is deliverable}
    
    \EndProcedure
    \Statex        
   
    \Procedure{deliver}{j,ts}
        \State refresh()
        \If{$\exists k\in \Pi$ and $v$ s.t. $\lr{\lr{ts,v}_j,\sigma}\in deliver_k$ where $\sigma$ is a set of $f+1$ signatures on $\lr{ready,\lr{ts,v}_j}$} \label{line:complete_deliver}
        \State $deliver_i\gets deliver_i\cup \{\lr{\lr{ts,v}_j,\sigma} \}$\label{line:copy_deliver}
        \State return $v$
        \EndIf
        \State return $\bot$

    \EndProcedure
    \Statex

    \Procedure{refresh}{}

        \For{$j\in[n]$}

            \State $m\gets send_j$ \label{l:get_val}
            \If{$\nexists ts$, $val$ s.t. $m=\lr{ts,val}_j$}
                continue \Comment{$m$ is not a signed pair}
            \EndIf
            \State $echo_i\gets echo_i\cup \{m\}$ \label{l:add_to_echo}

            \If{$\neg$conflicting-echo$(m)$}\label{line:conf1}
                \State $ready_i\gets ready_i\cup \{\lr{ready, m}_i\}$
            \EndIf
            
            \If{$\exists S\subseteq \Pi$ s.t. $|S|\geq f+1$, $\forall j\in S, \lr{ready,m}_j\in ready_j$ and $\neg$conflicting-echo$(m)$} \label{line:conf2}

                \State $deliver_i\gets deliver_i\cup \{\lr{m,\sigma=\{\lr{ready,m}_j|j\in S\}}\}$ \Comment{$\sigma$ is the set of $f+1$ signatures}\label{line:add_to_deliver}
            \EndIf

    \EndFor

    \EndProcedure

    \end{algorithmic}

\end{algorithm}

We make two assumptions on the correct usage of our algorithm. The first is inherently required as shown in~\Cref{cor:infoften}:
\begin{assumption}
\label{assumption:refresh}
All correct processes infinitely often invoke methods of the reliable broadcast API.
\end{assumption}

The second is a straight forward validity assumption:

\begin{assumption}
\label{assumption:onets}
Correct processes do not invoke \emph{broadcast(ts,val)} twice with the same $ts$.
\end{assumption}

In~\Cref{bc_appendix} we prove the correctness of the reliable broadcast algorithm and conclude the following theorem:

\begin{theorem}
\Cref{alg:bc} implements an $f$-resilient Byzantine linearizable reliable broadcast object for any $f<\frac{n}{2}$.
\end{theorem}

\section{Byzantine Linearizable Snapshot}
\label{sec:snapshot}

In this section, we utilize a reliable broadcast primitive to construct a Byzantine snapshot object with resilience $n>2f$.

\subsection{Snapshot Object}
A snapshot~\cite{afek1993atomic} is represented as an array of $n$ shared single-writer variables that can be accessed with two
operations: \emph{update(v)}, called by process $i$, updates the $i^{th}$ entry in the array and \emph{snapshot} returns an array.
The sequential specification of an atomic snapshot is as follows: the $i^{th}$ entry of the array returned by a \emph{snapshot} invocation contains the value \emph{v} last updated by an \emph{update(v)} invoked by process $i$, or its variable's initial value if no update was invoked.

Following~\Cref{cor:infoften}, we again must require that correct processes perform operations infinitely often. 
For simplicity, we require that they invoke infinitely many snapshot operations; if processes invoke either snapshots or updates, we can have each update perform a snapshot and ignore its result.

\begin{assumption}
\label{assumption:inf_snaps}
All correct processes invoke snapshot operations infinitely often.
\end{assumption}

\def\NoNumber#1{{\def\alglinenumber##1{}\State #1}\addtocounter{ALG@line}{-1}}

\begin{algorithm}[htb]

    \begin{algorithmic}[1]
    
    \Statex shared SWMR registers:  $\forall j\in[n]$ $collected_i[j]\in \{\bot\}\cup\{\mathbb{N}\times Vals\}$ with selectors $ts$ and val, initially $\bot$
    \Statex $\forall k\in\mathbb{N}$, $savesnap_i[k]\in \{\bot\}\cup\{\text{array of }n\; Vals\times \text{set of messages}\}$ with selectors $snap$ and proof, initially $\bot$
    
    \Statex local variables: $ts_i\in \mathbb{N}$, initially 0
    \Statex $\forall j\in[n]$, $rts_i[j]\in \mathbb{N}$, initially 0
    \Statex $r,auxnum\in \mathbb{N}$, initially 0
    \Statex $p\in [n]$, initially 1
    \Statex  $\forall j\in[n],k\in\mathbb{N}$, \emph{seen$_i$[j][k]},$senders_i\in\mathcal{P}(\Pi)$, initially $\emptyset$ 
    \Statex $\sigma\gets \emptyset$ set of messages
    \Statex

    \Procedure{update}{$v$}
         \For {$j\in[n]$}\Comment{collect current memory state}
            \State update-collect($collected_j$) \label{line:collect_in_up}
        \EndFor
        \State $ts_i\gets ts_i+1$ \label{line:inc_ts}
        \State $collected_i[i]\gets \lr{ts_i,v}_i$\label{line:update_collect1} \Comment{update local component of collected}

    \EndProcedure
    \Statex

       \Procedure{snapshot}{}
       \For {$j\in[n]$}\Comment{collect current memory state}
            \State update-collect($collected_j$)\label{line:do_collect}
        \EndFor
        \State $c\gets collected_i$ \label{line:updatedc}
        \Repeat \label{line:s_r}
            \State $auxnum\gets auxnum+1$
            \State $snap\gets$ snapshot-aux$(auxnum)$ \label{line:ret_aux}
        \Until $snap\geq c$ \label{line:f_r} \label{line:return_snapshot}\Comment{snapshot is newer than the collected state}
        \State return $snap$
    
    \EndProcedure

    \Statex
   
    \Procedure{update-collect}{c}
        \For {$k\in[n]$} \label{line:s_u}
            \If{$c[k].ts>collected_i[k].ts$ and $c[k]$ is signed by $k$}\label{line:mono_check} 
                \State $collected_i[k]\gets c[k]$\label{line:update_collect2}
            \EndIf
        \EndFor \label{line:f_u}
    \EndProcedure

    \algstore{part1}
    \end{algorithmic}

    \caption{Byzantine Snapshot: code for process $i$}
    \label{alg:snap1}

\end{algorithm}

\begin{algorithm}

    \begin{algorithmic}[1]
   \algrestore{part1}

    \Procedure{minimum-saved}{auxnum}
    
    \State $S\gets \{s| \exists j\in[n], s=savesnap_j[auxnum].snap$ and $savesnap_j[auxnum].proof$ is a valid proof of $s\}$
    \If{$S=\emptyset$} \label{line:checkstored1}
    
            \State return $\bot$
    \EndIf
        \State $res\gets$ infimum$(S)$ \label{line:set_ret} \Comment{returns the minimum value in each index}
        \State $savesnap_i[auxnum]\gets \lr{res,\bigcup_{j\in[n]} savesnap_j[auxnum].proof}$ \label{line:storesnap1}
        
        \State update-collect($res$) \label{line:up_collect_w_save}

        \State return $res$ \label{line:ret_inf}
    
    \EndProcedure
    \Statex
    
    \Procedure{snapshot-aux}{auxnum}
        \State initiate new reliable broadcast instance \label{line:new_bc}
        \State $\sigma\gets\emptyset$
       \For {$j\in[n]$}\Comment{collect current memory state}\label{line:do_col2}
            \State update-collect($collected_j$)\label{line:do_collect2}
        \EndFor
        \State $senders_i\gets\{i\}$ \Comment{start message contains collect} \label{line:add_to_starts_local}
        \State broadcast(0,$\lr{collect_i}_i$) \label{line:round_0_bc}

        \While{true}

            \State $cached\gets$ minimum-saved($auxnum$) \label{line:check_cache}  \Comment{check if there is a saved snapshot}     \If{$cached\neq\bot$}
                return $cached$ \label{line:ret_arr1}
            \EndIf
            
            \State $p\gets (p+1)$ mod $n +1$ \Comment{deliver messages in round robin}\label{bcmec3}
      
            \State $m\gets$ deliver($p,rts_i[p]$) \label{bcmec1} \Comment{deliver next message from $p$}
            
            \If{$m = \bot$} \label{line:ignore_bot}
                 continue
            \EndIf
            
                
            \If{$rts_i[p]$= 0 and $m$ contains a signed collect array $c$}
            \Statex \Comment{start message (round 0)}
                \State $\sigma\gets \sigma\cup\{m\}$
                \State update-collect($c$) \label{line:invoke_update_collect}
                \State $senders_i\gets senders_i\cup\{j\} $ \label{line:up_senders_1}

        \ElsIf{$m$ contains a signed set of processes, $jsenders$} 
        \Statex \Comment{round $r$ message for $r>0$}

        \If {$jsenders\nsubseteq senders_i$} \label{line:check_contains}
    
            \State continue \Comment{cannot process message, its dependencies are missing}
   
        \EndIf
        \State $\sigma\gets \sigma\cup\{m\}$

       \State $seen_i[j][rts_i[p]]\gets jsenders\cup seen_i[j][rts_i[p]-1]$   \label{line:updateseen}

        \EndIf
    \State $rts_i[p]\gets rts_i[p]+1$\label{bcmec2}

        \Statex

    \If {received $f+1$ round-$r$ messages for the first time} \label{line:beginarounds}
            \State $r\gets r+1$ \label{line:next_round}
            \State broadcast($r,\lr{senders_i}_i$)\label{line:bc_rounds}
    \EndIf

    \Statex
  
      \If{$\exists s$ s.t.
        $|\{j |$ $seen_i[j][s]=senders_i\}|=f+1$} \Comment{stability condition} \label{line:requestedcondition}

            \State $r\gets 0$
            \State  \emph{senders$_i$}$\gets\emptyset$ 

            \State  $\forall j\in[n],k\in\mathbb{N}$, \emph{seen$_i$[j][k]}$\gets \emptyset$ 

            
            \State $cached\gets$ minimum-saved($auxnum$) \Comment{re-check for saved snapshot}
            \If{$cached\neq\bot$}
            return $cached$ \label{line:ret_arr2}
            \EndIf
            
            \State $savesnap_i[auxnum]\gets \lr{collect_i,\sigma}$ 
            \Statex \Comment{$\sigma$ contains all received messages in this snapshot-aux instance} \label{line:updatesavesnap}\label{line:storesnap2}

            \State return $collect_i$ \label{line:return_snapaux}

        \EndIf

        \EndWhile
    
    \EndProcedure

    \end{algorithmic}

    \caption{Byzantine Snapshot auxiliary procedures: code for process $i$}
    \label{alg:snap2}

\end{algorithm}

\subsection{Snapshot Algorithm}

Our pseudo-code is presented in~\Cref{alg:snap1,alg:snap2}.
During the algorithm, we compare snapshots using the (partial) coordinate-wise order. That is, let $s_1$ and $s_2$ be two $n$-arrays. We say that $s_2>s_1$ if $\forall i\in[n]$, $s_2[i].ts>s_1[i].ts$.

Recall that all processes invoke snapshot operations infinitely often.
In each snapshot instance, correct processes start by collecting values from all registers and broadcasting their collected arrays in ``start'' messages (message with timestamp 0).
Then, they repeatedly send the identities of processes from which they delivered start messages until there exists a round such that the same set of senders is received from $f+1$ processes in that round. Once this occurs, it means that the $f+1$ processes see the exact same start messages and the snapshot is formed as the supremum of the collects in their start messages.

We achieve optimal resilience by waiting for only $f+1$ processes to send the same set. Although there is not necessarily a correct process in the intersection of two sets of size $f+1$, we leverage the fact that reliable broadcast prevents equivocation to ensure that nevertheless, there is a common \emph{message} in the intersection, so two snapshots obtained in the same round are necessarily identical. Moreover, once one process obtains a snapshot $s$, any snapshot seen in a later round exceeds $s$.

Each process $i$ collects values from all processes' registers in a shared variable $collect_i$.
When starting a snapshot operation, each process runs update-collect, where it updates its collect array (line~\ref{line:do_collect}) and saves it in a local variable $c$ (line~\ref{line:updatedc}).
When it does so, it updates the $i^{th}$ entry to be the highest-timestamped value it observes in the $i^{th}$ entries of all processes' collect arrays (lines \ref{line:s_u} --~\ref{line:f_u}).
Then, it initiates the snapshot-aux procedure with a new auxnum tag.
Snapshot-aux returns a snapshot, but not necessarily a ``fresh'' one that reflects all updates that occurred before \emph{snapshot} was invoked. Therefore, snapshot-aux is repeatedly called until it collects a snapshot $s$ such that $s\geq c$, according to the snapshots partial order (lines~\ref{line:s_r} --~\ref{line:f_r}).

By~\Cref{assumption:inf_snaps} and since the $auxnum$ variable at each correct process is increased by 1 every time snapshot-aux is called, all correct processes participate in all instances of snapshot-aux. 
When a correct process invokes a snapshot-aux procedure with auxnum, it first initiates a new reliable broadcast instance at line~\ref{line:new_bc}, dedicated to this instance of snapshot-aux.
Note that although processes invoke one snapshot-aux at a time, they may engage in multiple reliable broadcast instances simultaneously. That is, they continue to partake in previous reliable broadcast instances after starting a new one.
As another preliminary step of snapshot-aux, each correct process once again updates its collect array using the update-collect procedure (lines~\ref{line:do_col2}--~\ref{line:do_collect2}) and broadcasts it to all processes at line~\ref{line:round_0_bc}.
During the execution, a correct processes delivers messages from all other processes in a round robin fashion. The local variable $p$ represents the process from which it currently delivers. In addition, $rts[p]$
maintains the next timestamp to be delivered from $p$ (lines~\ref{bcmec1}, \ref{bcmec2}, \ref{bcmec3}). Note that if the delivered message at some point is $\bot$, $rts[p]$ is not increased, so all of $p$'s messages are delivered in order (line~\ref{line:ignore_bot}).

Snapshot-aux proceeds in rounds, which are reflected in the timestamps of the messages broadcast during its execution. Each correct process starts snapshot-aux at round $0$, where it broadcasts its collected array; we refer to this as its start message.
It then continues to round $r+1$ once it has delivered $f+1$ round $r$ messages (line~\ref{line:next_round}).
Each process maintains a local set $senders$ that contains the processes from which it received start messages (line~\ref{line:up_senders_1}).
In every round (from 1 onward) processes send the set of processes from which they received start messages (line~\ref{line:bc_rounds}).

Process $i$ maintains a local map $seen[j][r]$ that maps a process $j$ and a round $r$ to the set of processes that $j$ reported to have received start messages from in rounds 1--r (line~\ref{line:updateseen}), but only if $i$ has received start messages from all the reported processes (line~\ref{line:check_contains}).
By doing so, we ensure that if for some correct process $i$ and a round r $seen_i[j][r]$ contains a process $l$, $l$ is also in $senders_i$.
If this condition is not satisfied, the delivered counter for $j$ ($rts[j]$) is not increased and this message will be repeatedly delivered until the condition is satisfied.

Once there is a process $i$ such that there exists a round $s$ and there is a set $S$ of $f+1$ processes $j$ for which $seen_i[j][s]$ is equal to $senders_i$, we say that the \emph{stability condition} at line~\ref{line:requestedcondition} is satisfied for $S$. At that time, $i$ and $f$ more processes agree on the collected arrays sent at round 0 by processes in $senders_i$, and $collect_i$ holds the supremum of those collected arrays.
This is because whenever it received a start message, it updated its collect so that currently $collect_i$ reflects all collects sent by processes in $senders_i$.
Thus, $i$ can return its current collect as the snapshot-aux result.
Since reliable broadcast prevents Byzantine processes from equivocating, there are $f$ more processes that broadcast the same $senders$ set at that round, and any future round will ``see'' this set.
As we later show, after at most $n+1$ rounds, the stability condition holds and hence the size of $seen$ is $O(n^3)$. Together with the collected arrays, the total space complexity is cubic in $n$.

To ensure liveness in case some correct processes complete a snapshot-aux instance before all do, we add a helping mechanism.
Whenever a correct process successfully completes snapshot-aux, it stores its result in a savesnap map, with the auxnum as the key (either at line~\ref{line:storesnap1} or at line~\ref{line:storesnap2}).
This way, once one correct process returns from snapshot-aux, others can read its result at line~\ref{line:check_cache} and return as well.
To prevent Byzantine processes from storing invalid snapshots, each entry in the savesnap map is a tuple of the returned array and a proof of the array's validity. The proof is the set of messages received by the process that stores its array in the current instance of snapshot-aux. Using these messages, correct processes can verify the legitimacy of the stored array. If a correct process reads from savesnap a tuple with an invalid proof, it simply ignores it.

\subsection{Correctness}

We outline the key correctness arguments highlighting the main lemmas. Formal proofs of these lemmas appear in~\Cref{snapshot_appendix}.
To prove our algorithm is Byzantine linearizable, we first show that all returned snapshots are totally ordered (by coordinate-wise order):

\begin{lemma}
\label{lemma:concurrent_snaps}
If two snapshot operations invoked by correct processes return $s_i$ and $s_j$, then $s_j\geq s_i$ or $s_j<s_i$. 
\end{lemma}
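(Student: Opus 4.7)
The plan is to split the claim into two sub-claims about the underlying \emph{snapshot-aux} invocations: \emph{intra-instance consistency}, saying that all values returned by \emph{snapshot-aux}$(k)$ to correct processes are coordinate-wise comparable, and \emph{inter-instance monotonicity}, saying that for $k_1 < k_2$ any return of \emph{snapshot-aux}$(k_2)$ dominates some return of \emph{snapshot-aux}$(k_1)$. Since every outer \emph{snapshot} call returns the value of some \emph{snapshot-aux}$(k)$ call, these two together yield the desired total order.

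For the intra-instance step, every return of \emph{snapshot-aux}$(k)$ is either the array $collected_p$ at the moment some correct process $p$ first satisfies the stability condition, or the infimum over a subset of such arrays taken inside \emph{minimum-saved}. It therefore suffices to show that the set of stabilization arrays in instance $k$ forms a chain. Suppose $p$ and $q$ reach stability with senders sets $S_p, S_q$ at rounds $\rho_p \leq \rho_q$. The stability condition yields two sets of $f+1$ processes broadcasting $S_p$ in round $\rho_p$ and $S_q$ in round $\rho_q$. Since $n = 2f+1$, these quorums share a process $r$. Reliable broadcast non-equivocation, the monotonicity of $senders$ at correct processes, and the dependency check at~\Cref{line:check_contains} (which refuses to process a round-$\rho$ message from $j$ whose reported senders are not already contained in the local $senders$) together force $S_p \subseteq S_q$. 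Reliable broadcast also guarantees that every deliverer of $j$'s start message sees the same signed collect, so $collected_p$ and $collected_q$ at stability are coordinate-wise suprema of the same collects restricted to $S_p$ and $S_q$, hence ordered.

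For the inter-instance step, every correct process participates in every instance. A caller $q$ of \emph{snapshot-aux}$(k_2)$ has previously returned from \emph{snapshot-aux}$(k_1)$ with some value $r$, and at that moment $collected_q \geq r$, either because $r = collected_q$ at stability, or because the helping path executed \emph{update-collect}$(r)$ inside \emph{minimum-saved}. The initial \emph{update-collect} loop and round-$0$ broadcast in instance $k_2$ preserve this, so $q$'s start message in instance $k_2$ carries a collect at least $r$; any process that then stabilizes in instance $k_2$ returns a coordinate-wise supremum that dominates $r$.

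The hardest part is the intra-instance nesting $S_p \subseteq S_q$. Quorum intersection gives a common process $r$, but $r$ may be Byzantine and a priori free to broadcast incomparable sets in rounds $\rho_p$ and $\rho_q$. The escape is the dependency check combined with reliable broadcast: correct deliverers suspend a round-$\rho$ message until all reported senders are already in their local $senders$, which, chained through rounds, forces the transitive contents of any usable round-$\rho_q$ message to be consistent with the usable round-$\rho_p$ messages. Making this chain argument precise across the interleaved deliveries of the two quorums is where most of the technical work lies.
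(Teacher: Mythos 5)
Your intra-instance step matches the paper's \Cref{same_aux} (quorum intersection on $f+1$-sized sets, non-equivocation of the common process via reliable broadcast, and cumulativity of the $jsenders$ unions across rounds), and is fine. The gap is in the inter-instance step, and it is a logical gap, not just a missing detail. Your stated sub-claim is that any return of \emph{snapshot-aux}$(k_2)$ dominates \emph{some} return of \emph{snapshot-aux}$(k_1)$. Combined with intra-instance comparability this does not yield the lemma: if $s_i$ (from $k_2$) dominates $s_l$ (some return of $k_1$) and $s_j$ (also from $k_1$) is comparable to $s_l$, the bad case is $s_j > s_l$, where you only learn that $s_i$ and $s_j$ both dominate $s_l$ --- which says nothing about their mutual comparability. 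Moreover, the justification you give for the sub-claim is itself too strong: a process that stabilizes in instance $k_2$ returns the supremum of the collects of the processes in \emph{its} senders set $S'$, and $S'$ need only contain $f+1$ processes; it need not contain the start message of the particular correct process $q$ you picked, so the return need not dominate $q$'s carried collect. What you actually get is the paper's \Cref{lemma:snap_geq_collect}: $s_i$ dominates the round-$0$ collect $c_l$ of \emph{some} correct process $l$ in the stabilizing set, where $l$ is determined by the stabilization, not chosen by you.

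The missing idea is the paper's \Cref{lemma:diff_an}, which resolves exactly the bad case $s_j > s_l$ using the \emph{savesnap} helping mechanism and a timing argument: let $t_1$ be when the stability condition for $s_j$ is first met. If $l$ starts instance $k_i$ after $t_1$, its initial collect already reflects a correct process's collect that dominates $s_j$, so $c_l \geq s_j$ and $s_i \geq c_l \geq s_j$. If instead $l$ saved $s_l$ at~\cref{line:updatesavesnap} before $t_1$, then $j$ re-reads the saved snapshots (\cref{line:check_cache}) between $t_1$ and returning, sees $s_l < s_j$, and, because \emph{minimum-saved} returns the infimum, cannot return $s_j$ --- a contradiction. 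Without this case analysis your decomposition cannot close, so you either need to strengthen your inter-instance claim to ``any return of $k_2$ dominates \emph{every} return of $k_1$'' and prove it via this argument, or restructure the proof as the paper does.
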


Based on this order, we define a linearization. Then, we show that our linearization preserves real-time order, and it respects the sequential specification.
We construct the linearization $E$ as follows: First, we linearize all snapshot operations of correct processes in the order of their return values. Then, we linearize every update operation by a correct process immediately before the first snapshot operation that ``sees'' it.  We say that a snapshot returning $s$ \emph{sees} an update by process $j$ that has timestamp $ts$ if $s[j].ts\geq ts$. If multiple updates are linearized to the same point (before the same snapshot), we order them by their start times.
Finally, we add updates by Byzantine processes as follows: We add \emph{update(v)} by a Byzantine process $j$ if there is a linearized snapshot that returns $s$ and $s[j].val=v$. We add the update immediately before any snapshot that sees it.

We next prove that the linearization respects the sequential specification. 

\begin{lemma}
\label{lemma:seq}
The $i^{th}$ entry of the array returned by a \emph{snapshot} invocation contains the value $v$ last updated by an \emph{update(v)} invoked by process $i$ in $E$, or its variable's initial value if no update was invoked.
\end{lemma}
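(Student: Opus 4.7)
The plan is to verify both real-time-order preservation and the sequential specification of $E$, handling correct and Byzantine processes separately and leveraging the total order on returned snapshots from~\Cref{lemma:concurrent_snaps}.

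For a correct process $i$, each \emph{update} invocation increments $ts_i$ at~\cref{line:inc_ts} and writes a unique signed pair $\lr{ts_i,v}_i$ to $collected_i[i]$ at~\cref{line:update_collect1}. The monotonicity check at~\cref{line:mono_check} together with signature unforgeability ensures that at every correct process, the recorded entry for $i$ is always a signed pair actually produced by $i$, and the sequence of such recorded entries is nondecreasing in timestamp. Consequently, if a linearized snapshot returns $s$ with $s[i]=(k,v_k)$, then $v_k$ is exactly the value written by the $k$-th update of $i$. By construction, updates of $i$ are linearized just before the first snapshot that sees them, so~\Cref{lemma:concurrent_snaps} implies that every snapshot preceding $s$ in $E$ sees at most the $k$-th update of $i$, while all updates by $i$ up through the $k$-th are linearized before $s$. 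Thus the last update by $i$ preceding $s$ in $E$ has value $v_k=s[i].val$, as required.

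For a Byzantine process $j$, I would first establish a consistency claim: whenever two returned snapshots $s_1$ and $s_2$ satisfy $s_1[j].ts=s_2[j].ts$, they also satisfy $s_1[j].val=s_2[j].val$. This is the central obstacle, because a Byzantine writer may try to publish signed pairs with identical timestamps but different values. The argument relies on the no-equivocation property of reliable broadcast together with the monotonicity check at~\cref{line:mono_check}: although different correct processes may independently record conflicting signed entries for $j$ in their individual collects, every returned snapshot is obtained by aggregating collects carried in start messages disseminated via reliable broadcast, whose contents are identical at all correct processes, and through successive update-collect invocations these aggregations stabilize so that each timestamp for $j$ is bound to a unique value across all snapshots. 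Granted this claim, exactly one Byzantine update per distinct $(t,v)$ pair ever appearing in $s[j]$ is added to $E$ immediately before the first snapshot that sees it; then, for any linearized snapshot $s$ with $s[j]=(t,v)$,~\Cref{lemma:concurrent_snaps} ensures that every earlier snapshot has its $j$-entry timestamp $\leq t$, so the last Byzantine update for $j$ linearized before $s$ has value $v$, matching $s[j].val$.

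The case of no preceding update is handled uniformly: if $s[i]$ has not been overwritten in $E$ then $s[i].ts=0$ (or $s[i]=\bot$), which by the initial configuration of $collected_i[i]$ and the update-collect guard corresponds to the variable's initial value. The technical crux of the whole argument is the Byzantine consistency claim; I expect this to require a careful case analysis of how signed entries propagate through update-collect both from start messages inside a given snapshot-aux instance and through the helping path via \emph{savesnap} across instances, making essential use of the fact that reliable broadcast prevents a Byzantine process from having different start messages delivered to different correct receivers in the same instance.
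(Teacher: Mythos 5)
Your treatment of a correct process $i$ is essentially the paper's argument: updates by $i$ carry strictly increasing timestamps (\Cref{lemma:tschangedinupdate}, \Cref{lemma:mono_of_collect}), they are invoked sequentially, so ordering same-point updates by start time coincides with ordering them by timestamp, and the last update linearized before a snapshot returning $(ts_v,v)$ in coordinate $i$ is exactly the one that wrote $v$; the initial-value case is handled the same way. (A scoping remark: real-time order is not part of this lemma and is discharged separately by \Cref{lemma:following_snaps}, \Cref{lemma:snap_sees_update} and \Cref{sees}.)

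The gap is in the Byzantine case. You correctly isolate the claim on which everything hinges --- two returned snapshots that agree on $s[j].ts$ for a Byzantine $j$ must agree on $s[j].val$, since otherwise two distinct updates of $j$ are linearized at the same point and one of the snapshots violates the specification --- but you do not prove it: you only announce a plan and defer ``a careful case analysis'' of how signed entries propagate through update-collect and savesnap. As submitted, the proposal therefore does not establish the lemma for Byzantine writers. The paper dispatches this case in a single sentence (``we add updates only for values at the moment they are seen''), implicitly leaning on the total order of returned snapshots as arrays rather than on a fresh propagation argument: snapshots returned in the same round of the same snapshot-aux instance are equal as arrays (the $r_1=r_2$ case of \Cref{same_aux}, via \Cref{newob} and \Cref{inv_sup}), and the cross-round and cross-instance cases are ordered by \Cref{same_aux} and \Cref{lemma:concurrent_snaps}. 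That is the route you should take instead of re-deriving consistency from the mechanics of \cref{line:mono_check}. Be aware, however, that this reduction is delicate exactly where you suspect: the order in \Cref{lemma:concurrent_snaps} is stated on timestamps, and the ``supremum'' of start-message collects is only well defined up to tie-breaking when a Byzantine $j$ plants two signed pairs with equal timestamps and different values into different start messages. Your instinct that this step needs an explicit argument is sound, but a complete proof must actually supply it --- most naturally by reducing value agreement at equal timestamps to the array equality guaranteed in the same-round case of \Cref{same_aux}.
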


Because an update is linearized immediately before some snapshot sees it and snapshots are monotonically increasing, all following snapshots see the update as well. Next, we prove in the two following lemmas that $E$ preserves the real-time order.

\begin{lemma}
\label{lemma:following_snaps}
If a snapshot operation invoked by a correct process $i$ with return value $s_i$ precedes a snapshot operation invoked by a correct process $j$ with return value $s_j$, then $s_i\leq s_j$. 
\end{lemma}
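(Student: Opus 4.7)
The plan is to track the local state of process $i$ at the moment its snapshot returns $s_i$ and then propagate that information to $j$ through the shared \emph{collected} registers. The key invariant I would establish first is that whenever a correct process $k$'s snapshot-aux call returns a snapshot $s$, the register $collected_k$ satisfies $collected_k \geq s$ (coordinate-wise) at the moment of return. To prove this, I would do a case analysis on the three return points of snapshot-aux: at~\cref{line:return_snapaux} the returned value is literally $collect_i$ (so $collected_i$ trivially dominates it); at~\cref{line:ret_arr1} and~\cref{line:ret_arr2} the returned value is the output of minimum-saved, but that procedure calls update-collect on its result at~\cref{line:up_collect_w_save} immediately before returning, so again $collected_i$ dominates the returned array.

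Next I would establish that $collected_k$ is monotonically non-decreasing throughout the execution. This follows directly from the fact that $k$ is the only writer to $collected_k$ and every write happens through update-collect (\cref{line:update_collect2}) or via~\cref{line:update_collect1} with a strictly increasing local timestamp $ts_i$ (\cref{line:inc_ts}), both guarded by the check at~\cref{line:mono_check}. Combining the two, when $i$'s \emph{snapshot} returns $s_i$ we have $collected_i \geq s_i$, and this inequality continues to hold at every later point in time.

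Now consider the later invocation of \emph{snapshot} by $j$. Its very first action is to execute update-collect on each $collected_k$, including $collected_i$. Here I would need to observe that every non-$\bot$ entry of $collected_i$ is signed by its corresponding owner (since $i$ only writes $\lr{ts_i,v}_i$ into its own slot at~\cref{line:update_collect1} and only copies validly signed entries via update-collect), so $j$'s update-collect will merge every entry of $collected_i$ into $collected_j$ under the coordinate-wise max rule. Consequently, after the initial update-collect loop $collected_j \geq collected_i \geq s_i$, and hence the local variable $c$ captured at~\cref{line:updatedc} satisfies $c \geq s_i$. The repeat loop at~\cref{line:s_r}--\cref{line:f_r} only exits once the snapshot-aux return value $snap$ satisfies $snap \geq c$, yielding $s_j = snap \geq c \geq s_i$.

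The main obstacle, and where I would be most careful, is step one: the case analysis over the three return paths of snapshot-aux and, in particular, the cached-snapshot paths. One has to rule out the possibility that a cached snapshot read from $savesnap_k$ of some other process $k$ bypasses $i$'s local update, which is precisely why update-collect is invoked on the cached result inside minimum-saved. A subsidiary subtlety is ensuring that the initial update-collect loop in \emph{snapshot} reads $collected_i$ \emph{after} $i$'s snapshot has already returned; this is guaranteed by the hypothesis that $i$'s snapshot precedes $j$'s in real time, together with the monotonicity argument above, but it is worth stating explicitly so that the handoff between operations is unambiguous.
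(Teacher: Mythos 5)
Your proposal is correct and follows essentially the same route as the paper: it establishes that $collected_i \geq s_i$ at the return of the last snapshot-aux (via the same case analysis over~\cref{line:return_snapaux} versus the cached paths through~\cref{line:up_collect_w_save}), invokes monotonicity of the collect registers, and closes the argument through $j$'s initial update-collect and the exit condition of the repeat loop at~\cref{line:f_r}. The additional remark about signed entries surviving the check at~\cref{line:mono_check} is a harmless elaboration of a step the paper leaves implicit.
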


\begin{lemma}
\label{lemma:snap_sees_update}
Let $s$ be the return value of a snapshot operation $snap_i$ invoked by a correct process $i$. Let $update_j(v)$ be an update operation invoked by a correct process $j$ that writes $\lr{ts,v}$ and completes before $snap_i$ starts. Then, $s[j].ts\geq ts$.
\end{lemma}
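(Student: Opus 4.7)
The plan is to follow the timestamp $ts$ written by $update_j(v)$ through the data flow of $snap_i$ and show it ends up in the returned array. First I would argue that once $update_j(v)$ completes, the register $collected_j[j]$ stably holds a value whose timestamp is at least $ts$. By the code of \emph{update}, process $j$ increments its local $ts_j$ and writes $\lr{ts,v}_j$ to $collected_j[j]$; since $j$'s $ts_j$ only increases and $j$ only rewrites its own component via signed pairs with strictly larger timestamps (either in later updates or via the monotonicity guard at line~\ref{line:mono_check}), we have $collected_j[j].ts \geq ts$ at every moment after $update_j$ completes.

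Next I would use the fact that $snap_i$ begins by running \emph{update-collect}$(collected_k)$ for every $k$ (lines~\ref{line:do_collect}), in particular for $k = j$. This read happens after $update_j$ has completed, so the value observed for index $j$ of $collected_j$ has timestamp at least $ts$ and carries $j$'s signature. The guard at line~\ref{line:mono_check} then either overwrites $collected_i[j]$ with this value or leaves in place something with an even larger timestamp. Either way, after the loop at lines~\ref{line:do_collect}, $collected_i[j].ts \geq ts$, and so the local copy $c \gets collected_i$ at line~\ref{line:updatedc} satisfies $c[j].ts \geq ts$.

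Finally, I would invoke the repeat-loop condition at lines~\ref{line:s_r}--\ref{line:f_r}: \emph{snapshot} only terminates and returns $snap$ once $snap \geq c$ in the coordinate-wise order. In particular, the returned array $s$ satisfies $s[j].ts \geq c[j].ts \geq ts$, which is exactly the claim.

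The main subtlety, and the one place I would be most careful, is the monotonicity argument in the first step: one has to rule out that some Byzantine process somehow rewinds $collected_j[j]$. This is handled by the signature check plus monotonic timestamp check inside \emph{update-collect}, together with the fact that $j$ is correct and its own writes to $collected_j[j]$ only use strictly increasing $ts_j$ (line~\ref{line:inc_ts}). Once that point is settled, the remainder of the proof is just reading off the loop invariants of \emph{snapshot} and the coordinate-wise termination test.
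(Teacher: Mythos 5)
Your proof is correct and follows essentially the same route as the paper's: establish that $collected_j[j].ts\geq ts$ persists after $update_j$ completes (the paper cites its Lemmas~\ref{lemma:tschangedinupdate} and~\ref{lemma:mono_of_collect}, whose content you re-derive inline), observe that the collect loop of $snap_i$ propagates this into $c$, and conclude via the termination test $snap\geq c$ (the paper's Observation~\ref{lemma:snap_contains_collected}). No gaps; the "subtlety" you flag about rewinding $collected_j[j]$ is exactly what the paper's monotonicity lemmas address.
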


It follows from~\Cref{lemma:snap_sees_update} and the definition of $E$, that if an update precedes a snapshot it is linearized before it, and from~\Cref{lemma:following_snaps} that if a snapshot precedes a snapshot it is also linearized before it.
The following lemma ensures that if an update precedes another update it is linearized before it. That is, if a snapshot operation sees the second update, it sees the first one.

\begin{lemma}
\label{sees}
If update1 by process $i$ precedes update2 by process $j$ and a snapshot operation $snap$ by a correct process sees update2, then $snap$ sees update1 as well.
\end{lemma}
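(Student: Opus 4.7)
The plan is to trace the signed value $\langle ts_2, v_2\rangle_j$ through the shared memory and exploit the write order inside update2 together with the monotonicity of the $collected$ registers. First, I would argue that at the instant $j$ writes $\langle ts_2, v_2\rangle_j$ into $collected_j[j]$ on \cref{line:update_collect1}, the register $collected_j[i]$ already contains a signed value $\langle ts_1', v_1\rangle_i$ with $ts_1' \geq ts_1$. This holds because update2 begins by running the update-collect loop on \cref{line:collect_in_up} over every $collected_k$, in particular over $collected_i$; since update1 completes before update2 begins, $j$'s read of $collected_i[i]$ returns a signature by $i$ of timestamp $\geq ts_1$, and the monotonicity check on \cref{line:mono_check} propagates it into $collected_j[i]$ before \cref{line:inc_ts} increments $j$'s counter to $ts_2$. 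Consequently, no signed pair $\langle \geq ts_2,\cdot\rangle_j$ exists anywhere in the system before that moment, and from then on every read of $collected_j[i]$ returns a value with timestamp $\geq ts_1$.

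Next I would reduce the lemma to an invariant on the local collect of correct processes during a snapshot-aux call. By inspection of the code, the snap returned by $snap$ is either (i) the local $collect_q$ that some correct process $q$ returns at \cref{line:return_snapaux}, or (ii) the coordinate-wise infimum of a nonempty set $S$ of previously saved snaps returned through \texttt{minimum-saved}. Coordinate-wise infimum preserves the implication ``$s'[j].ts\geq ts_2 \Rightarrow s'[i].ts\geq ts_1$'', and each element of $S$ was originally produced by some correct process's earlier snapshot-aux, so induction on the $auxnum$ parameter reduces (ii) to (i).

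For case (i) the invariant to establish is: for every correct $q$ at the moment $q$ returns from snapshot-aux, if $collect_q[j].ts \geq ts_2$ then $collect_q[i].ts \geq ts_1$. I would argue this by induction on the update-collect events that first push $collect_q[j]$ past $ts_2$. Every such event is an update-collect call with argument $c$ where $c[j]$ is a valid $j$-signature on a timestamp $\geq ts_2$; by the first paragraph such a signature cannot have been produced until after update1 completed. When $c$ is a direct read of some correct process's $collected_k$, the first-paragraph argument (base case $k=j$) or the inductive hypothesis (case $k\neq j$) forces $c[i].ts \geq ts_1$ as well, and the same applies when $c$ is the collect carried by a start message from a correct sender. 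The main obstacle will be the Byzantine-sender case: an adversary may legitimately pair $\langle ts_2, v_2\rangle_j$ with a stale $i$-slot inside its start message. Here I would use the stability condition on \cref{line:requestedcondition} together with the fact that any snapshot-aux instance whose output sees update2 must execute past the moment $j$ wrote $\langle ts_2, v_2\rangle_j$, to conclude that among the round-$0$ messages processed by $q$ before stability fires there is at least one from a correct sender whose own collect carries $i$-slot $\geq ts_1$ by the inductive hypothesis, so that $collect_q[i]$ is raised to $\geq ts_1$ before $q$ returns.
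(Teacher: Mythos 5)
Your first two paragraphs track the paper's own proof closely: the paper likewise argues that update2's preliminary collect at \cref{line:collect_in_up} absorbs update1's value into $collected_j[i]$ before the signature $\langle ts_2,v_2\rangle_j$ is first created at \cref{line:update_collect1}, and it reduces every returned snapshot (including ones adopted from $savesnap$) to a supremum of round-0 start collects via \Cref{newob}; your infimum-preservation remark plus the reduction to the directly-returned case is an acceptable substitute for that observation. The divergence, and the genuine gap, is in your third paragraph. The step ``among the round-0 messages processed by $q$ before stability fires there is at least one from a correct sender whose own collect carries $i$-slot $\geq ts_1$ by the inductive hypothesis'' does not follow from the invariant you set up: the inductive hypothesis only delivers a fresh $i$-slot for a collect whose $j$-slot already carries a timestamp $\geq ts_2$, and the correct members of $senders_q$ need not satisfy that antecedent. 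Concretely, with $n=2f+1$ the stability set at \cref{line:requestedcondition} need contain only one correct process, which may be $q$ itself, whose start collect at \cref{line:do_collect2} was formed before update1 even began and is therefore stale in slot $i$; the remaining members can be Byzantine and broadcast start collects that pair the genuine signature $\langle ts_2,v_2\rangle_j$ (harvested from $collected_j[j]$ after \cref{line:update_collect1}) with an initial or outdated value in slot $i$. Each slot of such a message passes the per-slot signature test at \cref{line:mono_check} independently, so $collect_q[j]$ is raised to $ts_2$ while $collect_q[i]$ remains stale, and nothing in the stability condition forces a fresh slot-$i$ value into any start message of $senders_q$. The obstacle you named is real, but your patch does not close it.

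For what it is worth, you have correctly isolated the one step that the paper's proof passes over: the paper asserts that a returned snapshot whose $j$-th entry has timestamp $\geq ts_2$ ``reflects $collect_j$'' as an entire array after \cref{line:update_collect1}, whereas \Cref{lemma:tschangedinupdate} only traces the provenance of the $j$-th slot, and the start message that actually carries the $ts_2$ signature into the supremum need not dominate $collect_j[i]$ when it originates at a Byzantine sender. A complete argument must rule out precisely the adversarial pairing you describe --- for instance by showing that in any such schedule some correct process has already deposited a pre-$ts_2$ snapshot in $savesnap$ that caps the instance's output below $ts_2$ in slot $j$ --- and neither your induction nor the paper's one-line assertion currently supplies that.
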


Finally, the next lemma proves the liveness of our algorithm.

\begin{lemma}{(Liveness)}
\label{liveness}
Every correct process that invokes some operation eventually returns.
\end{lemma}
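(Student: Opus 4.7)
The plan is to handle the two operation types separately. An \emph{update} operation performs a bounded loop of \emph{update-collect} calls followed by a constant number of local register writes, so it terminates trivially. The nontrivial work is showing that \emph{snapshot} returns. By the code, a \emph{snapshot} invocation terminates iff (a) every \emph{snapshot-aux} call it makes returns, and (b) eventually some \emph{snapshot-aux} returns a value $snap$ with $snap \geq c$, where $c$ is the collect taken just before the repeat-until loop.

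For (a), my plan is to argue that once a correct process enters \emph{snapshot-aux}$(auxnum)$, then by~\Cref{assumption:inf_snaps} together with the fact that each correct process's $auxnum$ counter is incremented by one per invocation, every correct process eventually invokes \emph{snapshot-aux}$(auxnum)$ as well, using the matching reliable broadcast instance. By the validity property of reliable broadcast, the round-$0$ messages of all correct processes become deliverable by every correct process. The next step is to show that the \emph{senders} sets of the correct processes eventually stabilize: reliable broadcast agreement implies that whenever one correct process delivers a round-$0$ message, every correct process eventually does so, so the sets converge to a common set $S$ containing every correct process (and possibly some Byzantine ones whose start messages were delivered). Once \emph{senders} has stabilized to $S$ at every correct process, in each subsequent round $r$ that such a process enters it broadcasts $(r, S)$, the subset check $jsenders \subseteq senders_i$ eventually passes for messages from correct processes, so these round-$r$ messages are processed. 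For any round $r$ in which at least $f+1$ correct processes have broadcast after stabilization, those $f+1$ messages all carry $jsenders = S = senders_i$, so $seen_i[j][r] = senders_i$ for those $f+1$ senders, triggering the stability condition at~\cref{line:requestedcondition} and causing the procedure to return. The \emph{minimum-saved} shortcut can only accelerate termination, since it only fires once some correct process has already completed via the stability condition.

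For (b), the argument rests on monotonicity. The variable $c$ equals $collected_i$ at a fixed instant; each entry $c[k] = \lr{ts, v}$ is present, signed by $k$, in process $i$'s own collect register at that instant. Because every correct process keeps invoking \emph{snapshot} by~\Cref{assumption:inf_snaps}, each such invocation performs \emph{update-collect} over every other process's collect register, and the signed, timestamp-monotone update rule at~\cref{line:mono_check} ensures that every correct process's collect array eventually satisfies $collected_j \geq c$. From this point on, whenever \emph{snapshot-aux} of a correct process reaches the stability condition it returns its own $collect_i \geq c$; moreover, the stored snapshots consumed by \emph{minimum-saved} are collect arrays computed by correct processes that previously completed the same instance, so their infimum is also eventually $\geq c$. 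Combined with~\Cref{lemma:following_snaps}, which guarantees that later snapshots are only larger, this closes the repeat-until loop.

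The main obstacle will be the stability argument inside (a): the subset check, the per-process local round counters, and the presence of Byzantine senders together make it delicate to argue that $f+1$ round-$r$ messages from correct processes are eventually delivered to any given correct process $i$ with $jsenders = senders_i$ at the \emph{same} $r$. The lever that makes this go through is that reliable broadcast forces the \emph{senders} sets of correct processes to converge via agreement on round-$0$ deliveries, so after this convergence point the subset check is a no-op among correct processes and the round counters advance in near lock-step, which suffices to reach a common $r$ at which $f+1$ correct processes all report the stabilized set.
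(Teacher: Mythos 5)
Your proposal is correct and follows essentially the same route as the paper: part (a) mirrors the paper's auxiliary lemma that every \emph{snapshot-aux} instance terminates (senders sets stabilize because they are bounded and reliable broadcast forces agreement on delivered start messages, triggering the stability condition, with \emph{savesnap} helping stragglers), and part (b) mirrors the paper's argument that, by monotonicity of collects, the collect $c$ propagates to all correct processes so that a later \emph{snapshot-aux} instance must return a value $\geq c$ (the paper pins this down as instance $k+1$ where $k$ is the largest $auxnum$ initiated before the snapshot began, whereas you argue it "eventually" holds, but the substance is the same).
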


We conclude the following theorem:

\begin{theorem}
\Cref{alg:snap1} implements an $f$-resilient Byzantine linearizable snapshot object for any $f<\frac{n}{2}$.
\end{theorem}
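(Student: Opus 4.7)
My plan is to assemble the theorem from the lemmas already stated in the section, treating them as modular building blocks. The overall strategy is: (i) exhibit an explicit linearization $E$ of the execution using the one constructed just before Lemma~\ref{lemma:seq}, (ii) verify that $E$ satisfies the three requirements of linearizability (inclusion of completed operations, real-time order, sequential specification), and (iii) verify $f$-resilience. Since Byzantine linearizability only asks that $H|_{correct}$ be extendable into a linearizable history, the construction of $E$ — which first takes all correct operations and then \emph{adds} Byzantine updates whose values surface in some returned snapshot — is already of the right form, so I only need to check that $E$ is a legal linearization of the augmented history.

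Next I would discharge the three linearizability clauses in turn. For the sequential specification, I would simply invoke Lemma~\ref{lemma:seq}, which states exactly that the $i^{\text{th}}$ entry returned by any snapshot is the value of the last linearized update by process $i$ (or the initial value). For the real-time order, there are four cases to check: snapshot-before-snapshot follows from Lemma~\ref{lemma:following_snaps}; update-before-snapshot follows from Lemma~\ref{lemma:snap_sees_update} combined with the fact that each correct update is linearized immediately before the first snapshot that sees it; update-before-update follows from Lemma~\ref{sees}, since if $u_1 \prec u_2$ then any snapshot seeing $u_2$ sees $u_1$, so $u_1$'s linearization point (before the first snapshot seeing it) precedes $u_2$'s. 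For snapshot-before-update there is nothing to prove, because updates of correct processes are placed just before snapshots, and if a completed snapshot precedes an update by real time then the update is inserted before a strictly later snapshot. Byzantine updates need a separate look: by construction each such update is inserted right before some snapshot that sees it, and there are no Byzantine snapshots in $E$, so no precedence constraint involving a Byzantine operation can be violated.

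The one subtle step is making sure that the \emph{total order} provided by Lemma~\ref{lemma:concurrent_snaps} is consistent with the placement of updates. Concretely, I need to argue that if two snapshots happen to return the same value $s$ (ties in the coordinate-wise order), then ordering them arbitrarily still produces a valid sequential history, since they see exactly the same multiset of updates; and that when multiple updates are linearized at the same ``slot'' (just before the same snapshot), breaking ties by start time (as specified in the construction) is consistent with both real-time and with the snapshot's observed entry for each process, because an update by $i$ only affects the $i^{\text{th}}$ coordinate. This is the main place where I would have to spell out a short argument rather than just cite a lemma.

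Finally, for $f$-resilience I invoke Lemma~\ref{liveness} directly, which guarantees that every correct process returns from every invoked operation. Combining (i)--(iii) with liveness yields the theorem for all $f < n/2$; the matching lower bound from the corollary to Theorem~\ref{theorem:majority_needed} shows this resilience is tight. The main obstacle I anticipate is verifying the tie-breaking step in the previous paragraph cleanly, and making the handling of pending operations (snapshots that never complete, and broadcast-side pending calls inherited from the reliable-broadcast layer) precise when defining $H|_{correct}$; everything else is a bookkeeping assembly of the already-established lemmas.
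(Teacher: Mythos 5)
Your proposal is correct and follows essentially the same route as the paper's proof, which likewise just assembles Lemma~\ref{lemma:concurrent_snaps} (total order), the linearization $E$, Lemma~\ref{lemma:seq} (sequential specification), Lemmas~\ref{lemma:following_snaps}--\ref{sees} (real-time order), and Lemma~\ref{liveness} ($f$-resilience). The extra care you devote to tie-breaking and pending operations goes slightly beyond what the paper writes but does not change the approach.
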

\begin{proof}
\Cref{lemma:concurrent_snaps} shows that there is a total order on snapshot operations. Using this order, we have defined a linearization $E$ that satisfies the sequential specification (\Cref{lemma:seq}). We then proved that $E$ also preserves real-time order (Lemmas~\ref{lemma:following_snaps} --~\ref{sees}). Thus, \Cref{alg:snap1} is Byzantine linearizable. In addition,~\Cref{liveness} proves that \Cref{alg:snap1} is $f$-resilient.
\end{proof}

\section{Conclusions}
\label{sec:conclusions}

We have studied shared memory constructions in the presence of Byzantine processes.
To this end, we have defined Byzantine linearizability, a correctness condition suitable for shared memory algorithms that can tolerate Byzantine behavior. We then used this notion to present both upper and lower bounds on some of the most fundamental components in distributed computing. 

We proved that atomic snapshot, reliable broadcast, and asset transfer are all problems that do not have $f$-resilient emulations from registers when $n\leq2f$. On the other hand, we have presented an algorithm for Byzantine linearizable reliable broadcast with resilience $n>2f$. We then used it to implement a Byzantine snapshot with the same resilience. Among other applications, this Byzantine snapshot can be utilized to provide a Byzantine linearizable asset transfer. Thus, we proved a tight bound on the resilience of emulations of asset transfer, snapshot, and reliable broadcast. 

Our paper deals with feasibility results and does not focus on complexity measures. In particular, we assume unbounded storage in our constructions.
We leave the subject of efficiency as an open question for future work.

\bibliographystyle{plainurl}
\bibliography{references}

\appendix
 \newpage
\begin{appendices}
\section{Byzantine Asset Transfer}
\label{at_appendix}

\def\NoNumber#1{{\def\alglinenumber##1{}\State #1}\addtocounter{ALG@line}{-1}}

\begin{algorithm}

    \begin{algorithmic}[1]
    
    \Statex shared Byzantine snapshot: $S$
    \Statex initial-- immutable array of initial balances
    \Statex local variables: $txns_i$ -- sets of outgoing transaction, initially $\{\}$ 
    \Statex $ts_i\in \mathbb{N}$, initially 0

    \Statex $snap$ -- array of sets of transactions, initially array of empty sets \Comment{the last snapshot taken}
    \Statex
    
   \Statex \textbf{struct} $txn$ \textbf{contains}:
    
    timestamp ts,
    
    source src,
    
    destination dst,
    
    amount amount
    \Statex

    \Procedure{balance}{j,snap}
    
        \State $incoming\gets0$
        \State $outgoing\gets0$
        \For{$l\in [n]$}
            \For{$k\in snap[l]$}
                \If {$snap[l][k].dst=j$ and valid($snap[l][k]$)}
                    \State $incoming\gets incoming+snap[l][k].amount$
                \EndIf
            \EndFor
        \EndFor
        
        \For{$k\in snap[j]$}
            \If {valid($snap[j][k]$)}
                \State $outgoing\gets outgoing+snap[j][k].amount$
            \EndIf
        \EndFor
        
        \State return $initial(j)+incoming-outgoing$
     
    \EndProcedure
    \Statex

    \Procedure{transfer}{src,dst,amount}
        \State $ts_i\gets ts_i+1$
        \State $snap\gets S.snapshot()$ \label{l:s_op2}
        \If{$balance(src,snap)<amount$} \label{line:check_bal}
            \State return false
            
        \EndIf
        \State $txns_i\gets txns_i.append( \lr{ts_i,src,dst,amount,snap}_i)$
        \State $S.update(txns_i)$ \label{line:lin_of_up}
        \State return true

    \EndProcedure
    \Statex

    \Procedure{read}{j}

        \State $snap\gets S.snapshot()$ \label{l:s_op1}
        \State return $balance(j,snap)$

    \EndProcedure

    \end{algorithmic}

    \caption{Byzantine Asset Transfer: code for process $i$}
    \label{alg:at}

\end{algorithm}

In this section we adapt the asset transfer implementation from snapshots given in~\cite{guerraoui2019consensus} to a Byzantine asset transfer.
The algorithm is very simple. It is based on a shared snapshot array $S$, with a cell for each client process $i$, representing $i$’s outgoing transactions. An additional immutable array holds all processes’ initial balances. A process $i$’s balance is computed by taking a snapshot of $S$ and applying all of $i$’s valid incoming and outgoing transfers to $i$’s initial balance. A transfer invoked by process $i$ checks if $i$’s balance is sufficient, and if so, appends the transfer details (source, destination, and amount) to $i$’s cell.
Similarly to the use of dependencies in the (message-passing broadcast-based) asset transfer algorithm of [17], we also track the history of every transaction.
To this end, we append to the process’s cell also the snapshot taken to compute the balance for each transaction.

\begin{theorem}
\Cref{alg:at} implements an $f$-resilient Byzantine linearizable asset transfer object for any $f<\frac{n}{2}$.
\end{theorem}
\begin{proof}

At any point during a sequential execution, we denote by $B(p)$ the balance of process $p$.
Recall that the operation $transfer(src,dst,amount)$ causes the following changes: $B(src)=B(src)-amount$ and $B(dst)=B(dst)+amount$.

In addition, at any point during a concurrent execution, we represent by $balance(p)$ the balance of process $p$ derived from the state as follows:

If $p$ is a correct process:
\begin{equation*}
  \begin{aligned}
    balance(p) & \overset{def}{=} initial(p)\cr
      & + \sum_{j\in correct(\Pi)}{amount\:|\: txn=\lr{*,j,p,amount,*}\in S[j] \wedge  valid(txn)}\cr
      & + \sum_{j\in Byzantine(\Pi)}{amount\:|\: txn=\lr{*,j,p,amount,*}\in S[j] \wedge  valid(txn)}  \cr
      & \wedge \text{txn was read by some correct process}\cr
      & - \sum_{j\in \Pi}{amount\:|\: txn=\lr{*,p,j,amount,*}\in S[p] \wedge  valid(txn)}
  \end{aligned}
\end{equation*}

If $p$ is a Byzantine process:
\begin{equation*}
  \begin{aligned}
    balance(p) & \overset{def}{=} initial(p)\cr
      & + \sum_{j\in correct(\Pi)}{amount\:|\: txn=\lr{*,j,p,amount,*}\in S[j] \wedge  valid(txn)}\cr
        & + \sum_{j\in Byzantine(\Pi)}{amount\:|\: txn=\lr{*,j,p,amount,*}\in S[j] \wedge  valid(txn)}  \cr
        & \wedge \text{txn was read by some correct process}\cr
      & - \sum_{j\in \Pi}{amount\:|\: txn=\lr{*,p,j,amount,*}\in S[p] \wedge  valid(txn)} \cr
      & \wedge \text{txn was read by some correct process}
  \end{aligned}
\end{equation*}

Let us examine an execution $E$ of the algorithm.
Let $H$ be the history of $E$.
First, we define $H^c$ to be the history $H$ after removing any pending read operations and any pending transfer operations that did not complete line~\ref{line:lin_of_up}.
We define $H'$ to be an augmentation of $H^c|{correct}$ as follows.

For every Byzantine process $j$ and a transaction $txn=(ts,j,dst,amount,deps)$ such that $txn$ appears in the array returned by the snapshot procedure (either in line~\ref{l:s_op1} or line~\ref{l:s_op2}) for at least one correct process $i$, we add to $H'$ a \emph{transfer$_j$(j,dst,amount)} operation that begins and ends immediately before the first correct process performs that snapshot procedure.
Since at least one correct process reads this transaction, this moment is well-defined.
We construct a linearization $E'$ of $H'$ by defining the following linearization points:

\begin{itemize}
 
    \item Let $o$ be a \emph{read$_i$(j)} operation by a correct process $i$ that completes line~\ref{l:s_op2}. The operation linearizes at that moment.
    
    \item Let $o$ be a \emph{transfer$_i$(i,dst,amount)} operation by a correct process $i$ that completed line~\ref{line:lin_of_up}. The operation linearizes at that moment.
    Note that operations that do not complete this line are removed from $H'$. By the code, these lines are between the invocation and the return of the broadcast procedure.

    \item If $o$ is a completed \emph{transfer$_i$(i,dst,amount)} operation by a correct process $i$ that returns false it linearizes at line~\ref{l:s_op1}.

    \item Every Byzantine \emph{transfer$_j$(j,dst,amount)} operation by process $j$ linearizes at the moment we added it.
\end{itemize}

In $H'$ there are no read operations by Byzantine processes. It is clear from construction that each operation invoked by a correct process is mapped to some point between its invocation event and its response event.
We now prove that the concrete concurrent run simulates the specification. That is, if we execute the sequential run defined by the linearization points the changes in the balances (represented by B) reflects the actual changes on $balance$. 
Before the execution begins, $B(p)$ is the initial balance of process $p$. As the snapshot is empty before the run begins, it holds by definition that $B(p)=balance(p)$. We now show that at any point $B(p)=balance(p)$.

We prove the equivalence of $B(p)$ and $balance(p)$ by induction on the steps in the executions. 
We assume that the claim holds before a particular step and show that it remains the same after each step. For a correct process $p$, $balances(p)$ changes at line~\ref{line:lin_of_up} when some transfer involving $p$ is updated in the snapshot. As this is the linearization point of a transfer operation, the same change in balance also applies to $B(p)$ at that moment. For a Byzantine process $p$, $balances(p)$ changes at line~\ref{l:s_op2} or line~\ref{l:s_op2} when its transaction is being read by a correct process. A transfer operation by Byzantine processes is added immediately before the first correct process reads it, so this change also reflect $B(p)$ at that moment.

Next, we prove $f$-resilience.

\begin{lemma}{(Liveness)}
Every correct process that invokes some operation eventually returns.
\end{lemma}
\begin{proof}
This is immediate from the snapshot $f$-resilient guarantees and the fact that all other operations are local computations.
\end{proof}

\end{proof}

\section{Reliable Broadcast: Correctness}
\label{bc_appendix}

We now prove our reliable broadcast algorithm's correctness. We first notice:

\begin{observation}
  If process $i$ is correct and $v$ appears in $echo_i$ or $ready_i$ it is never deleted.\label{inv:remains} 
\end{observation}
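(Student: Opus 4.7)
The plan is essentially to do a syntactic inspection of the algorithm, since the claim is about what a correct process does to its own registers. Because $i$ is correct, it follows the code of \Cref{alg:bc} verbatim, so the only writes to $echo_i$ and $ready_i$ are those performed by $i$ inside its own procedures.

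I would start by locating every line of the algorithm in which $echo_i$ or $ready_i$ appears on the left-hand side of an assignment. A scan of the \emph{refresh} procedure shows that the only two such lines are $echo_i \gets echo_i \cup \{\lr{m}_i\}$ and $ready_i \gets ready_i \cup \{\lr{ready, m}_i\}$; nowhere else in \emph{broadcast}, \emph{deliver}, or \emph{refresh} are these registers written. I would then observe that both assignments have the form $R \gets R \cup \{\cdot\}$, i.e., they monotonically grow the set by at most one element each and never remove one.

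From this it follows by a trivial induction on the sequence of steps taken by $i$ that for any value $v$ appearing in $echo_i$ (resp.\ $ready_i$) at some time $t$, the set $echo_i$ (resp.\ $ready_i$) at every subsequent time contains $v$ as well, which is exactly the statement of the observation. There is no real obstacle here beyond being careful to enumerate the assignments; the only subtlety worth noting in the write-up is that the claim only concerns \emph{correct} $i$, since a Byzantine process could in principle overwrite its SWMR registers with arbitrary (possibly smaller) sets, and that assumption is used precisely to invoke the code of the algorithm as the full description of $i$'s register updates.
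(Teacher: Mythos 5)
Your proof is correct and matches the paper's intent exactly: the paper states this as an observation with no written proof, treating it as immediate from the fact that the only assignments to $echo_i$ and $ready_i$ in the code have the form $R \gets R \cup \{\cdot\}$. Your enumeration of the two assignment sites in \emph{refresh}, the monotone-union observation, and the remark that correctness of $i$ is what licenses reading the code as the complete description of its register updates is precisely the justification the paper leaves implicit.
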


\begin{lemma}
 If process $i$ is correct and $\lr{\lr{ts,v}_i,\sigma}$ appears in $deliver_j$ for any process $j$ then $i$ previously invoked $broadcast(ts, v)$.\label{inv:bc_occured}
\end{lemma}
\begin{proof}
 Since we assume unforgeable signatures, $i$ has previously signed $\lr{ts,v}$. By the code, this is only possible if $i$ invoked $broadcast(ts,v)$.
\end{proof}

We next prove the following lemma, identifying invariants of~\Cref{alg:bc}.

\begin{lemma}
~\Cref{alg:bc} satisfies the following invariants:

\newcounter{myinvariants}

\begin{list}{I\arabic{myinvariants}:}{\usecounter{myinvariants}\setlength{\rightmargin}{\leftmargini}}

    \item If $\lr{\lr{ts,v}_i,\sigma}$ (where $\sigma$ is a set of $f+1$ ready signatures) appears in $deliver_j$ for any processes $i,j$, then $\lr{ready,\lr{ts,v}_i}_k\in ready_k$ for a correct process $k$.\label{inv:deliverthenready}
    
    \item If $\lr{ready,\lr{ts,v}_i}_j\in ready_j$ for a correct process $j$, then $\lr{ts,v}_i\in echo_j$.\label{inv:readythenecho}
    
    \item If $\lr{ready,\lr{ts,v}_i}_j$ appears in $ready_j$ and $\lr{ready,\lr{ts,w}_i}_{j'}$ appears in $ready_{j'}$ for any two correct processes $j,j'$ then $v=w$.\label{inv:ready_agreement}

   \item If $\lr{\lr{ts,v}_i,\sigma}$ appears in $deliver_j$ and $\lr{\lr{ts,w}_i,\sigma}$ appears in $deliver_{j'}$ for any two correct processes $j,j'$ then $v=w$.\label{inv:deliver_agreement}
    
\end{list}
\end{lemma}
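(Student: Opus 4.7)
The plan is to prove the four invariants in a cascade, starting with the structural ones I\ref{inv:deliverthenready} and I\ref{inv:readythenecho}, and then using the "double-collect" structure of the refresh code to establish the agreement invariant I\ref{inv:ready_agreement}, from which the deliver-agreement invariant I\ref{inv:deliver_agreement} follows by a second quorum argument.

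For I\ref{inv:deliverthenready} I would invoke quorum intersection. Since $n\geq 2f+1$ and the set $\sigma$ contains $f{+}1$ valid ready signatures on $\lr{ready,\lr{ts,v}_i}$, at least one signer $k$ is correct. By unforgeability, $k$ itself signed $\lr{ready,\lr{ts,v}_i}$, and by inspection of \textsc{refresh} this only happens when $k$ writes that tag into $ready_k$; Observation~\ref{inv:remains} then guarantees the entry persists. For I\ref{inv:readythenecho} a direct code-ordering argument suffices: in the refresh iteration that adds $\lr{ready,\lr{ts,v}_i}_j$ to $ready_j$, the correct process $j$ has, earlier in the same iteration, written $\lr{\lr{ts,v}_i}_j$ to $echo_j$ (the step $echo_i \gets echo_i \cup \{\lr{m}_i\}$), so by Observation~\ref{inv:remains} the echo entry is present at and after the ready write.

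The main obstacle is I\ref{inv:ready_agreement}. I would argue by contradiction, assuming correct processes $j,j'$ end up with ready entries on the same $(i,ts)$ carrying values $v\neq w$. Let $W_j$ and $W_{j'}$ denote the moments their refresh iterations write $\lr{ts,v}_i$ and $\lr{ts,w}_i$ into $echo_j$ and $echo_{j'}$ respectively, and let $C_j, C_{j'}$ denote the conflicting-echo checks that immediately precede the ready writes $R_j, R_{j'}$; the code imposes $W_j < C_j < R_j$ and similarly for $j'$. Without loss of generality $W_j \leq W_{j'}$, so $W_j < W_{j'} < C_{j'}$. When $j'$ performs its conflicting-echo check it reads $echo_j$ at some point strictly after $W_j$, and since $j$ is correct and never removes echo entries (Observation~\ref{inv:remains}), the check must observe $\lr{ts,v}_i \in echo_j$. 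Because $v\neq w$, this is a conflict, so the guard on line~\ref{line:conf1} fails and $j'$ could not have performed $R_{j'}$, contradicting the assumption. The subtle point I want to be careful about is the interleaving of echo writes and the read sweep inside \textsc{conflicting-echo}; I would make explicit that the register operations have a well-defined linearization order and that the check is monotone in time, so whichever of $W_j, W_{j'}$ happens first is visible to the later check.

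Finally, I\ref{inv:deliver_agreement} follows by applying I\ref{inv:deliverthenready} to both deliver entries to extract correct processes $k,k'$ with $\lr{ready,\lr{ts,v}_i}_k \in ready_k$ and $\lr{ready,\lr{ts,w}_i}_{k'} \in ready_{k'}$, and then invoking I\ref{inv:ready_agreement} to conclude $v=w$. No new ideas are needed beyond the earlier invariants.
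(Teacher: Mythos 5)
Your proof is correct and follows essentially the same route as the paper: pigeonhole on the $f{+}1$ ready signatures for I\ref{inv:deliverthenready}, code ordering plus Observation~\ref{inv:remains} for I\ref{inv:readythenecho}, the double-collect/visibility argument for I\ref{inv:ready_agreement}, and composition of I\ref{inv:deliverthenready} with I\ref{inv:ready_agreement} for I\ref{inv:deliver_agreement}. The only cosmetic difference is that in I\ref{inv:ready_agreement} the paper names the \emph{first} correct processes to echo each value and to ready $v$, whereas you order the echo writes of $j$ and $j'$ directly; both yield the same conclusion that the later conflicting-echo check must observe the earlier, persistent echo entry.
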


\begin{proof}
\newcounter{myinvariants2}

\begin{list}{I\arabic{myinvariants2}:}{\usecounter{myinvariants2}\setlength{\rightmargin}{\leftmargini}}

    \item Since $\lr{\lr{ts,v}_i,\sigma}$ appears in $deliver_j$ and it contains a set of $f+1$ signatures on $\lr{ready,\lr{ts,v}_i}$, there is at least one correct process $k$ that signed $\lr{ready,\lr{ts,v}_i}$ and added it to its ready register. By~\Cref{inv:remains}, it is not deleted from the register.

    \item Immediate from the code and~\Cref{inv:remains}.

    \item
    Since $\lr{ready,\lr{ts,v}_i0_j}$ appears in $ready_j$ and $j$ is correct, by~I\ref{inv:readythenecho} at least one correct process signed $\lr{ts,v}_i$ and added it to its echo register.
    Let $p_1$ be the first correct process to do so, and let $t_1$ be the moment of adding $\lr{ts,v}_i$ to $echo_{p_1}$ (see~\Cref{fig:conflicting_echoes} for illustration).
    By~\Cref{inv:remains}, it is not deleted from the register.
    Similarly, let $p_2$ be the first correct process to add $\lr{ts,w}_i$ to $echo_{p_2}$ at time $t_2$. WLOG, $t_1\geq t_2$.
    In addition, let $p_3$ be the first correct process to add $\lr{ready,\lr{ts,v}_i}$ to $ready_{p_3}$, and let $t_3$ be the moment of the addition. 
    By~I\ref{inv:readythenecho} it follows that $t_3>t_1$.
    By~\Cref{inv:remains}, the content of $echo_{p_2}$ and $ready_{p_3}$ is not deleted during the run.
    By the protocol, at some point in time between $t_1$ and $t_3$, $p_3$ executes line~\ref{line:conf1} and reads all echo registers.
    Let $t_1<t^*<t_3$ be the time when $p_3$ reads $echo_{p_2}$.
    Since $t_1\geq t_2$ we conclude that $t^*>t_2$.
    Since, $p_3$ does not see a conflicting value in $echo_{p_2}$, we get that $v=w$.
    
    \item 
    By~I\ref{inv:deliverthenready} at least one correct process $j$ signed $\lr{ready,\lr{ts,v}_i}$ and added it to $ready_j$ and at least one correct process $j'$ signed $\lr{ready,\lr{ts,w}_i}$ and added it to $ready_{j'}$. Thus, by~I\ref{inv:ready_agreement} $v=w$.

\end{list}
\end{proof}

\begin{figure}
    \centering
    \includegraphics[scale=0.7]{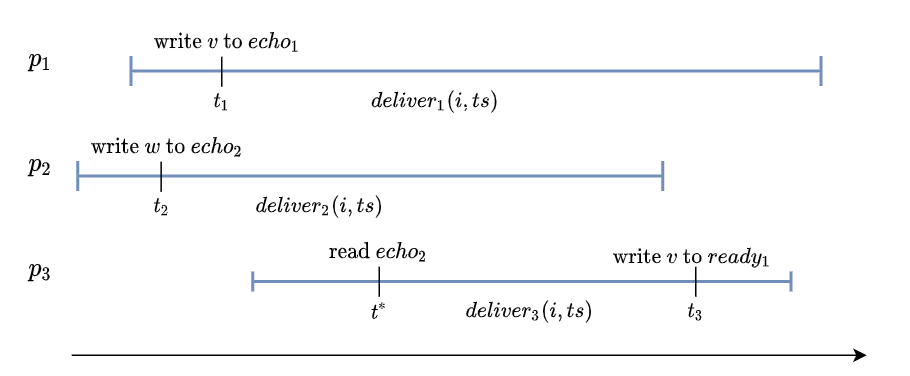}
    \caption{Concurrent deliver operations.}
    \label{fig:conflicting_echoes}
\end{figure}

Let us examine an execution $E$ of the algorithm.
Let $H$ be the history of $E$.
First, we define $H^c$ to be the history $H$ after removing any pending deliver operations and any pending broadcast operations that did not complete line~\ref{line:copy_deliver} (which is called from line~\ref{line:call_deliver}).
We define $H'$ to be an augmentation of $H^c|{correct}$ as follows.
For every Byzantine process $j$ and a value $v$ such that $v$ is returned by \emph{deliver$_i$(j,ts)} for at least one correct process $i$, we add to $H'$ a \emph{broadcast$_j$(ts,v)} operation that begins and ends immediately before the first correct process adds $\lr{\lr{ts,v}_j,\sigma}$ to its delivery register. Since at least one correct process adds this value at line~\ref{line:copy_deliver}, this moment is well-defined.
We construct a linearization $E'$ of $H'$ by defining the following linearization points:

\begin{itemize}

    \item Let $o$ be a \emph{broadcast$_i$(ts,v)} operation by a correct process $i$ that completed line~\ref{line:copy_deliver}. Note that by the code every completed \emph{broadcast} operation completes line~\ref{line:copy_deliver} exactly once, and operations that do not complete this line are removed from $H'$.
    The operation linearizes when $\lr{\lr{ts,v}_j,\sigma}$ is added for the first time to delivery register of a correct process, which occurs either when $i$ executes line~\ref{line:copy_deliver} or when another correct process executes line~\ref{line:add_to_deliver} beforehand. By the code, these lines are between the invocation and the return of the broadcast procedure.

    \item Let $o$ be a \emph{deliver$_i$(j,ts)} operation by a correct process $i$ that completes line~\ref{line:copy_deliver} and returns $v\neq\bot$ (note that by the code every completed \emph{deliver} operation that returns $v\neq\bot$ completes line~\ref{line:copy_deliver} exactly once).
    If $i$ finds $\lr{\lr{ts,v}_j,\sigma}$ for some value $v$ in some correct process' deliver register at line~\ref{line:complete_deliver}, then the operation linearizes when $i$ first reads $\lr{\lr{ts,v}_j,\sigma}$ from a correct process.
    Otherwise, it linearizes at line~\ref{line:copy_deliver} when $i$ copies the data to $deliver_i$.
    
    \item If $o$ is a completed \emph{deliver$_i$(j,ts)} operation by a correct process $i$ that returns $\bot$ it linearizes at the moment of its invocation.

    \item Every Byzantine \emph{broadcast$_j$(ts,v)} operation by process $j$ linearizes at the moment we added it.
\end{itemize}

In $H'$ there are no deliver operations by Byzantine processes. The following lemmas prove that $E'$, the linearization of $H'$, satisfies the sequential specification:
\begin{lemma}
For a given \emph{deliver(j,ts)} operation that returns $v\neq\bot$, there is at least one preceding broadcast operation in $E'$ of the form \emph{broadcast(ts,v)} invoked by process $j$.
\end{lemma}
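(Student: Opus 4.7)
The plan is to split on whether process $j$ is correct or Byzantine and, in each case, (i) exhibit a broadcast$(ts,v)$ operation by $j$ in $H'$, then (ii) show that its linearization point in $E'$ precedes the linearization point of the given deliver$(j,ts)$ operation.

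For existence: if $j$ is correct, then because deliver$(j,ts)$ returned $v \ne \bot$, some tuple $\langle\langle ts,v\rangle_j,\sigma\rangle$ was read from a deliver register. Signatures are unforgeable, so Observation~\ref{inv:bc_occured} yields a prior invocation broadcast$(ts,v)$ by $j$, which by Assumption~\ref{assumption:onets} is the unique broadcast of $j$ carrying this $ts$. I would then argue this broadcast belongs to $H^c$: the tuple with $f{+}1$ valid ready signatures remains in memory, and $j$ calls deliver$(j,ts)$ infinitely often inside its broadcast loop, so one of $j$'s deliver calls eventually completes~\cref{line:copy_deliver}. If $j$ is Byzantine, broadcast$(ts,v)$ was explicitly inserted into $H'$ by construction, placed immediately before the first correct process adds $\langle\langle ts,v\rangle_j,\sigma\rangle$ to its deliver register.

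For the order, let $t_b$ denote the broadcast's linearization point and $t_d$ the deliver's. The linearization rule for deliver splits into two sub-cases: (a) $i$ reads the tuple from some correct process' deliver register at~\cref{line:complete_deliver}, and $t_d$ is the time of that read; or (b) $i$ finds the tuple only in Byzantine deliver registers and itself writes to $deliver_i$ at~\cref{line:copy_deliver}, so $t_d$ is that write. In sub-case (a), $t_b \le t_d$ because the read cannot precede the write that made the value first appear in any correct process' register. In sub-case (b), $i$'s own write at~\cref{line:copy_deliver} is the first correct-process addition, which makes $t_b = t_d$ when $j$ is correct and $t_b < t_d$ when $j$ is Byzantine (thanks to the ``immediately before'' placement of the synthesized broadcast).

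The one subtle point I expect to handle carefully is the equality $t_b = t_d$ in sub-case (b) with correct $j$. The resolution is to tie-break in favour of the broadcast at that shared atomic step. This is consistent: in this scenario $j$'s broadcast must still be pending at $t_d$ (otherwise $deliver_j$ would already contain the tuple, putting us in sub-case (a)), so the broadcast and the deliver are concurrent with respect to $\prec_{H'}$, and the chosen ordering does not violate any real-time precedence required by linearizability.
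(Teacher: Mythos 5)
Your proof is correct and follows essentially the same route as the paper's: existence of the broadcast comes from \Cref{inv:bc_occured} (correct $j$) or from the explicit insertion into $H'$ (Byzantine $j$), and the ordering is argued via the same two sub-cases of the deliver linearization point relative to the first time $\lr{\lr{ts,v}_j,\sigma}$ enters a correct process' deliver register. Your added care about the broadcast surviving into $H^c$ and about the tie at the shared write is a welcome refinement over the paper's terser argument; the only nit is that in sub-case (b) another correct process may add the tuple between $i$'s check at~\cref{line:complete_deliver} and $i$'s copy at~\cref{line:copy_deliver}, so $i$'s write need not be the first correct-process addition --- but then the broadcast only linearizes strictly earlier, so the conclusion is unaffected.
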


\begin{proof}

Let $o$ be a \emph{deliver$_i$(j,ts)} operation invoked by a correct process $i$ that returns $v\neq\bot$.
Let $t$ be the time when $\lr{\lr{ts,v}_j,\sigma}$ is added for the first time to a delivery register of a correct process (where $\sigma$ contains $f+1$ ready signatures). 
If $j$ is correct then by~\Cref{inv:bc_occured} $j$ previously invoked \emph{broadcast(ts,v)} and that broadcast linearizes at time $t$.
If $j$ is Byzantine then \emph{broadcast(ts,v)} by process $j$ is added to $H'$ immediately before $t$. There are two options to the linearization point of $o$.
If $i$ finds $\lr{\lr{ts,v}_j,\sigma}$ in some correct process' deliver register at line~\ref{line:complete_deliver}, then $o$ linearizes when $i$ first reads $\lr{\lr{ts,v}_j,\sigma}$ from a correct process and thus it is after time $t$. Otherwise, it linearizes at line~\ref{line:copy_deliver} when $i$ copies the data to $deliver_i$, which is also no earlier than time $t$.
\end{proof}

\begin{lemma}
For a \emph{broadcast$_i$(ts,v)} in $E'$, there does not exist any \emph{broadcast$_i$(ts,w)} in $E'$ for $v\neq w$.
\end{lemma}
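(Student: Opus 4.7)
The plan is to split into two cases based on whether the process $i$ is correct or Byzantine, since the broadcast operations appearing in $E'$ come from two distinct sources.

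First, suppose $i$ is correct. By Assumption~\ref{assumption:onets}, a correct process never invokes \emph{broadcast(ts,val)} twice with the same $ts$, so $H|_i$ contains at most one such broadcast. Since $H'|_{correct}=H^c|_{correct}$ and no Byzantine broadcast labeled by $i$ is added during the augmentation, there can be at most one \emph{broadcast$_i$(ts,$\cdot$)} in $E'$, which settles this case immediately.

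The interesting case is when $i$ is Byzantine. Here, the construction of $H'$ adds a \emph{broadcast$_i$(ts,v)} operation for each value $v$ such that some correct process returns $v$ from a \emph{deliver(i,ts)} call. Thus if both \emph{broadcast$_i$(ts,v)} and \emph{broadcast$_i$(ts,w)} appear in $E'$ with $v\neq w$, there exist correct processes $p$ and $p'$ that returned $v$ and $w$ respectively from \emph{deliver(i,ts)}. I would then inspect the code of \emph{deliver}: before returning a non-$\bot$ value, a correct process executes~\cref{line:copy_deliver}, writing $\lr{\lr{ts,v}_i,\sigma}$ into its own deliver register (together with a set $\sigma$ of $f+1$ ready signatures). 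Consequently $\lr{\lr{ts,v}_i,\sigma}\in deliver_p$ and $\lr{\lr{ts,w}_i,\sigma'}\in deliver_{p'}$, where both $p$ and $p'$ are correct.

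Now I would apply Invariant~I\ref{inv:deliver_agreement} directly to these two entries: it forces $v=w$, contradicting the assumption $v\neq w$. I do not expect a real obstacle here; the only subtlety worth being explicit about is that the value a correct process returns is precisely the one it writes into its own deliver register (not merely one it observed in some potentially Byzantine register), which is exactly what the code guarantees and what makes Invariant~I\ref{inv:deliver_agreement} applicable. Combining the two cases yields the uniqueness claim.
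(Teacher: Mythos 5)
Your proof is correct and follows essentially the same route as the paper: Assumption~\ref{assumption:onets} for the correct case, and Invariant~I\ref{inv:deliver_agreement} applied to the deliver-register entries of two correct processes for the Byzantine case. The only cosmetic difference is that you derive the two conflicting entries from the returning deliver operations themselves, whereas the paper derives them directly from the rule that added the Byzantine broadcasts to $H'$; both reduce to the same application of I\ref{inv:deliver_agreement}.
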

\begin{proof}
If $i$ is a correct process, the proof follows from~\Cref{assumption:onets}. If $i$ is Byzantine, \emph{broadcast$_i$(ts,v)} is added immediately before the first correct process adds $\lr{\lr{ts,v}_i,\sigma}$ to its delivery register. By~I\ref{inv:deliver_agreement}, no correct processes add $\lr{\lr{ts,w}_i,\sigma}$ to their delivery register for $v\neq w$ and \emph{broadcast$_i$(ts,w)} does not appear in $E'$.
\end{proof}

\begin{lemma}
For a given \emph{deliver(j,ts)} operation that returns $\bot$, there is no preceding broadcast operation in $H'$ of the form \emph{broadcast(ts,v)} invoked by process $j$, for $v\neq\bot$.
\end{lemma}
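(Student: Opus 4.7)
The plan is to proceed by contradiction: assume that some \emph{broadcast$_j$(ts,v)} with $v \neq \bot$ precedes a \emph{deliver$_i$(j,ts)} operation that returns $\bot$ in $E'$. I will derive a contradiction by showing that such a deliver call would in fact witness a deliverable tuple and return $v$ instead of $\bot$.

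First I would pin down the two relevant real-time moments. By the linearization rules stated just before the preceding lemmas, the deliver returning $\bot$ is linearized at its invocation time, which I will call $t_{inv}$. For the broadcast, both when $j$ is correct and when $j$ is Byzantine, the linearization point $t^*$ is the first real-time instant at which some correct process adds $\lr{\lr{ts,v}_j,\sigma}$ (with $f{+}1$ ready signatures) to its deliver register---in the correct case because that is how completed broadcasts are defined to linearize, and in the Byzantine case because that is exactly the moment at which the augmentation $H'$ inserted the broadcast. The hypothesis that the broadcast precedes the deliver in $E'$ yields $t^* < t_{inv}$.

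Next I would invoke a simple monotonicity property of~\Cref{alg:bc}: correct processes only ever extend their deliver registers, and do so via set union at~\cref{line:copy_deliver} and~\cref{line:add_to_deliver}. Hence, once $\lr{\lr{ts,v}_j,\sigma}$ enters some correct process $k$'s deliver register at time $t^*$, it remains there for the rest of the execution, and in particular throughout the deliver call made by $i$.

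Finally I would chase the code of the deliver procedure: after refresh(), process $i$ evaluates the guard at~\cref{line:complete_deliver} at some time strictly later than $t_{inv}$, and hence later than $t^*$. At that moment $deliver_k$ still contains $\lr{\lr{ts,v}_j,\sigma}$, which is a witness satisfying the guard, so $i$ would execute~\cref{line:copy_deliver} and return $v \neq \bot$, contradicting our assumption. The only subtlety I foresee is handling the correct-$j$ and Byzantine-$j$ cases uniformly, but this is transparent because $t^*$ is defined identically in both cases---it is always the first moment a correct process records the signed delivery tuple.
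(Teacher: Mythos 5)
Your proposal is correct and follows essentially the same route as the paper's own proof: argue by contradiction, observe that the broadcast linearizes no later than the first time a correct process adds $\lr{\lr{ts,v}_j,\sigma}$ to its deliver register, and that the deliver (linearized at its invocation) must therefore find this tuple when it evaluates the guard at~\cref{line:complete_deliver} and return $v\neq\bot$. The only difference is that you make explicit the monotonicity of the deliver registers and the uniform treatment of the correct and Byzantine cases, both of which the paper leaves implicit.
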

\begin{proof}
Let $o$ be a \emph{deliver(j,ts)} operation invoked by a correct process $i$ that returns $\bot$. Assume by way of contradiction that there is a preceding \emph{broadcast(ts,v)} operation in $H'$ invoked by process $j$, for $v\neq\bot$. By definition, the broadcast linearizes no later than the first adding of $\lr{\lr{ts,v}_j,\sigma}$ to a delivery register of a correct process. Thus, since $o$ linearizes at the moment of its invocation, it sees $\lr{\lr{ts,v}_j,\sigma}$ at some process' delivery register and returns $v\neq\bot$, in contradiction.
\end{proof}

Next, we prove $f$-resilience.

\begin{lemma}{(Liveness)}
Every correct process that invokes some operation eventually returns.
\end{lemma}
\begin{proof}
If a correct process $i$ invokes a deliver operation then by the code it returns in a constant time.
If it invokes \emph{broadcast(ts,v)}, it copies $\lr{ts,v}_i$ to $send_i$. By~\Cref{assumption:refresh}, all correct processes infinitely often call the reliable broadcast API and specifically the refresh procedure, see $\lr{ts,v}_i$ and copy it to their echo registers. As signatures are unforgable and $i$ is correct they do not find $\lr{ts,w}_i$ for any other $w\neq v$ in any other echo registers and copy a signed $\lr{ready,\lr{ts,w}_i}$ to their ready registers. By~I\ref{inv:remains}, eventually they all see $\lr{ready,\lr{ts,w}_i}$ in $f+1$ ready registers and copy $\lr{ts,w}_i$ to their deliver registers. Eventually $f+1$ correct processes have $\lr{ts,w}_i$ in their deliver registers, and since the signatures are valid, the check at line~\ref{line:complete_deliver} evaluates to true, and $i$ returns $v$ and finish the repeat loop.
\end{proof}

We conclude the following theorem:

\begin{theorem}
\Cref{alg:bc} implements an $f$-resilient Byzantine linearizable reliable broadcast object for any $f<\frac{n}{2}$.
\end{theorem}

\section{Byzantine Snapshot: Correctness}
\label{snapshot_appendix}

\begin{lemma}
\label{lemma:tschangedinupdate}
For a correct process $i$, at each point during an execution $collect_i[i]$ contains the value signed by $j$ with the highest timestamp until that point.
\end{lemma}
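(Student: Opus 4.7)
The plan is to proceed by induction on the prefix of the execution, maintaining the invariant that (1) the largest timestamp ever signed by $i$ inside a pair $\lr{ts,v}_i$ equals the current value of the local counter $ts_i$ (and $0$ before any update), and (2) $collected_i[i]$ stores exactly the signed pair carrying that timestamp (or $\bot$ if no update has been invoked yet). The whole argument rests on the observation that the only place where correct process $i$ ever produces a signed timestamp/value pair $\lr{ts,v}_i$ is line~\ref{line:update_collect1} of \textbf{update}: the broadcasts in \textbf{snapshot-aux} (lines~\ref{line:round_0_bc} and~\ref{line:bc_rounds}) sign collect arrays and \emph{senders} sets, not pairs of the form $\lr{ts,v}$, and $ts_i$ itself is modified nowhere but line~\ref{line:inc_ts}.

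The base case is immediate, since initially $collected_i[i]=\bot$, $ts_i=0$, and $i$ has not signed anything. For the inductive step I would enumerate the two lines that can write to $collected_i[i]$. The first is line~\ref{line:update_collect1} of \textbf{update}, which is executed immediately after line~\ref{line:inc_ts} increments $ts_i$; it installs a freshly signed pair whose timestamp is strictly larger than any previously signed timestamp, so both parts of the invariant are re-established. The second is line~\ref{line:update_collect2} of \textbf{update-collect} invoked with $k=i$; here the guard at line~\ref{line:mono_check} requires $c[i].ts > collected_i[i].ts$ \emph{and} $c[i]$ to carry a valid signature by $i$. By the induction hypothesis, $collected_i[i].ts$ is already the largest timestamp that $i$ has ever signed, so under unforgeability of signatures no well-formed $c[i]$ with a strictly larger timestamp can exist. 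Hence this branch never fires for $k=i$ (and the invariant is trivially preserved).

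The main obstacle is not algebraic but bookkeeping: one has to be sure no other procedure silently ``leaks'' an $i$-signature of a timestamp/value pair. This requires sweeping across \textbf{snapshot}, \textbf{snapshot-aux}, and \textbf{minimum-saved} to confirm that the only objects $i$ signs outside of \textbf{update} are collect arrays and sender sets (distinct syntactic shapes), and that \textbf{update-collect} is the sole reader that may overwrite $collected_i[i]$, at which point the signature check together with the induction hypothesis closes the argument. With that ruled out, the lemma follows directly from the inductive invariant.
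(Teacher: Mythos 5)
Your proof is correct and follows essentially the same route as the paper's: induction over the execution, a case split on the only two lines that can write $collected_i[i]$ (\cref{line:update_collect1} and \cref{line:update_collect2}), and the observation that unforgeability of $i$'s signature plus the induction hypothesis means the guard at \cref{line:mono_check} can never fire for $k=i$. Your version merely spells out the details (the explicit coupling of $ts_i$ with the largest signed timestamp, and the sweep confirming $i$ signs no other $\lr{ts,v}$ pairs) that the paper's terser proof leaves implicit.
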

\begin{proof}
By induction on the execution; $collect_i[i]$ can change either at line~\ref{line:update_collect1} or at line~\ref{line:update_collect2}. If it changes at line~\ref{line:update_collect1}, $ts_i$ is increased and  $collect_i[i]$ contains the value with the highest timestamp. By induction, no signed value encountered at line~\ref{line:mono_check} has s timestamp higher than the one in $collect_i[i]$, so it is not updated at line~\ref{line:update_collect2}.
\end{proof}

\begin{lemma}
\label{lemma:mono_of_collect}
For a correct process $i$, $collect_i$ is monotonically increasing.
\end{lemma}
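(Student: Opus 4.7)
The plan is to perform a simple case analysis on the points in the pseudocode where $collect_i$ is modified, and show that in each case the timestamp of the affected entry can only increase. Since $collect_i$'s partial order is defined coordinate-wise on timestamps, this will suffice.

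First, I would identify the only two lines in Algorithms~\ref{alg:snap1} and~\ref{alg:snap2} where process $i$ writes to $collect_i$. The first is~\cref{line:update_collect1} inside \textsc{update}, which writes $collected_i[i] \gets \lr{ts_i,v}_i$ immediately after the increment $ts_i \gets ts_i+1$ at~\cref{line:inc_ts}. By~\Cref{lemma:tschangedinupdate}, before this write $collect_i[i]$ stored a value whose timestamp did not exceed the old value of $ts_i$, so the new entry has a strictly larger timestamp. The second is~\cref{line:update_collect2} inside \textsc{update-collect}, which writes $collected_i[k] \gets c[k]$ only when the guard at~\cref{line:mono_check} holds, i.e.\ when $c[k].ts > collected_i[k].ts$. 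So this write also strictly increases the timestamp in position $k$.

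Next, I would note that writes to $collect_i$ at positions other than $i$ happen exclusively via \textsc{update-collect}, whether invoked from \textsc{update}, \textsc{snapshot}, \textsc{snapshot-aux}, or \textsc{minimum-saved} (via the calls on \cref{line:collect_in_up}, \cref{line:do_collect}, \cref{line:do_collect2}, \cref{line:invoke_update_collect}, and~\cref{line:up_collect_w_save}), so the guard argument above covers every non-self coordinate. For the self coordinate $i$, the write at~\cref{line:update_collect2} is also guarded by the same monotonicity check, and the only unguarded write is~\cref{line:update_collect1}, which as noted strictly increases $ts_i$.

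Concluding, for every coordinate $k$ and every pair of moments $t_1 < t_2$ in the execution, $collect_i[k].ts$ at time $t_2$ is at least $collect_i[k].ts$ at time $t_1$, which is exactly the monotonicity claim in the coordinate-wise partial order. I do not anticipate any real obstacle here; the lemma is essentially a direct inspection of the code, with~\Cref{lemma:tschangedinupdate} handling the only subtle case (that a correct process's \textsc{update} never replaces its own entry with a stale timestamp).
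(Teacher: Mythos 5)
Your proof is correct and follows essentially the same route as the paper's: a case analysis on the only two write points (\cref{line:update_collect1} and \cref{line:update_collect2}), using the guard at~\cref{line:mono_check} for the latter and \Cref{lemma:tschangedinupdate} for the former. No gaps.
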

\begin{proof}
Let $j\in[n]$.
We prove that every time the value in $collect_i[j]$ is updated from $m$ to $m'$, it holds that $m'.ts>m.ts$. 
By the code $collect_i[j]$ changes either at line~\ref{line:update_collect1} or at line~\ref{line:update_collect2}. In both cases, the value in $collect_i[j]$ is signed by $j$.
If $collect_i[j]$ changes at line~\ref{line:update_collect2}, then monotonicity is immediate from the condition at line~\ref{line:mono_check}. Otherwise, it changes at line~\ref{line:update_collect1}, indicating that $i=j$ and monotonicity follows from~\Cref{lemma:tschangedinupdate}.
\end{proof}

\begin{lemclone}{lemma:seq}
The $i^{th}$ entry of the array returned by a \emph{snapshot} invocation in $E$ contains the value $v$ last updated by an \emph{update(v)} invoked by process $i$ in $E$, or its variable's initial value if no update was invoked.
\end{lemclone}
\begin{proof}
Let $v$ be the value in the $i^{th}$ entry of the array returned by a \emph{snapshot}, with a corresponding timestamp $ts_v$.
By the definition of $E$, \emph{update(v)} by process $i$ with timestamp $ts$ is linearized immediately before $ts_v\geq ts$.
If $i$ is correct and multiple update operations by $i$ are linearized at that point, then since $i$ invokes updates sequentially and by~\Cref{lemma:tschangedinupdate,lemma:mono_of_collect} their start times are ordered according to the increasing timestamps.
Thus, as updates are linearized by their start times, $v$ matches the value of the last update.
If $i$ is Byzantine, since we add updates only for values at the moment they are seen, $v$ must match the value of the last update.
Additionally, if $v$ is an initial value, then no updates were linearized before it in $E$.
\end{proof}

\begin{observation}
\label{lemma:snap_contains_collected}
For a snapshot operation invoked by a correct process $i$, let $c_i$ be the collected array at line~\ref{line:do_collect} and let $s$ be the return value. Then, $s\geq c_i$.
\end{observation}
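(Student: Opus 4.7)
The proof is essentially a direct reading of the control flow of the \emph{snapshot} procedure in~\Cref{alg:snap1}. My plan is to show that the claim follows immediately from the exit condition of the Repeat-Until loop.

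First, I would pin down what $c_i$ is in terms of the pseudo-code. The snapshot procedure opens with a loop that invokes \emph{update-collect}$(collected_j)$ for each $j\in[n]$ at~\cref{line:do_collect}, after which \cref{line:updatedc} performs the assignment $c\gets collected_i$. So $c_i$, the collected array at~\cref{line:do_collect}, coincides with the local variable $c$ captured at~\cref{line:updatedc}.

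Second, I would trace the rest of the procedure. The return value $s$ is the final value of $snap$ produced by the Repeat-Until block at lines~\ref{line:s_r}--\ref{line:f_r}, where $snap$ is obtained from a call to \emph{snapshot-aux}. The loop's termination condition at~\cref{line:f_r} is precisely $snap\geq c$ in the coordinate-wise order defined at the start of the section. Therefore, at the moment the loop exits, we have $s = snap \geq c = c_i$, which is the claim.

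There is essentially no obstacle here; this observation is a bookkeeping statement whose sole purpose is to make the exit condition of the Repeat-Until loop explicit for later use. In particular, I expect it to feed into the proof of~\Cref{lemma:snap_sees_update}, where one needs to argue that any update by a correct process that completes before \emph{snapshot} is invoked is already reflected in $c_i$ (via~\Cref{lemma:mono_of_collect} and the semantics of \emph{update-collect}), and hence by this observation also in the returned snapshot $s$.
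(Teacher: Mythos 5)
Your proof is correct and matches the paper's, which simply states that the claim is immediate from the loop exit condition at~\cref{line:return_snapshot}; you spell out the same one-step argument in more detail. No issues.
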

\begin{proof}
Immediate from the condition at line~\ref{line:return_snapshot}.
\end{proof}

\begin{lemclone}{lemma:following_snaps}
If a snapshot operation invoked by a correct process $i$ with return value $s_i$ precedes a snapshot operation invoked by a correct process $j$ with return value $s_j$, then $s_i\leq s_j$.
\end{lemclone}
\begin{proof}
Assume $i$ invokes snapshot operation $snap_i$, which returns $s_i$ before $j$ invokes snapshot $snap_j$, returning $s_j$.
Let $c_1$ be the value of $collect_i$ that $j$ reads at line~\ref{line:do_collect} of $snap_j$ and let $c_2$ be the value it writes in $collect_j$ at line~\ref{line:updatedc}.
At the end of the last snapshot-aux in $snap_i$, $collected_i\geq s_i$ either because the return value is $collected_i$ (if snapshot-aux returns at line~\ref{line:return_snapaux}), or because $s_i$ is reflected in collect by the end of line~\ref{line:up_collect_w_save} if it is a savesnap returned at line~\ref{line:ret_arr1} or at line~\ref{line:ret_arr2}. Due to the monotonicity of collects (\Cref{lemma:mono_of_collect}), $s_i\leq c_1$.
Because $j$ reads $c_1$ when calculating $c_2$, $c_1\leq c_2$.
Finally, by~\Cref{lemma:snap_contains_collected}, $ c_2\leq s_j$ and by transitivity we get that $s_i\leq s_j$.
\end{proof}

\begin{lemclone}{lemma:snap_sees_update}
Let $s$ be the return value of a snapshot operation $snap_i$ invoked by a correct process $i$. Let $update_j(v)$ be an update operation invoked by a correct process $j$ that writes $\lr{ts,v}$ and completes before $snap_i$ starts. Then, $s[j].ts\geq ts$.
\end{lemclone}
\begin{proof}

Let $t_1$ be the time when $j$ completes line~\ref{line:update_collect1} in $update_j(v)$ and writes $\lr{ts,v}$. Let $t_2$ be the time when $i$ reads $collect_j[i]$ at line~\ref{line:do_collect} in $snap_i$.
By~\Cref{lemma:tschangedinupdate,lemma:mono_of_collect}, since $j$ is correct, it follows that $collect_j[j].ts\geq ts$ at time $t_2\geq t_1$. Thus, after line~\ref{line:updatedc} in $snap_i$ $collect_i[j].ts\geq ts$ and by~\Cref{lemma:snap_contains_collected}, $s[j].ts\geq ts$.


\end{proof}


\begin{invariant}
\label{inv_sup}
For any correct process $i$ that invokes snapshot-aux($k$), it holds that $collect_i$ is the supremum of the arrays in start message sent by processes in $senders_i$ from line~\ref{line:round_0_bc} and until the return value of snapshot-aux($k$) is determined at line~\ref{line:set_ret} or at line~\ref{line:return_snapaux}.
\end{invariant}
\begin{proof}
First, at line~\ref{line:round_0_bc} $senders_i$ contains $i$ itself, and $i$ sends exactly its $collect_i$ array.
The argument continues by induction on steps of snapshot-aux($k$).
Other than line~\ref{line:up_collect_w_save}, $collect_i$ and $senders_i$ change together:
Whenever $i$ receives a start message with an array $c$ from process $j$, it updates $collect_i$ with the higher-timestamped values found in $c$ and adds $j$ to $senders_i$ (lines~\ref{line:invoke_update_collect}--~\ref{line:up_senders_1}).

\end{proof}

\begin{definition}
\label{def:stability}
We say that the stability condition holds for a return value $s_1$ of snapshot-aux($k$) with a round $r$ and a set of processes $S$  if (1) $|S|\geq f+1$, (2) there is a set $S' \supseteq S$ so that for each $p \in S$ the union of all jsenders sets sent in $p$'s messages in rounds 1 to $r$ is $S'$, and (3) $s_1$ is the supremum of the collects sent in start messages of members of $S'$. 
\end{definition}

\begin{observation}
\label{newob}
If $s_1$ is returned from snapshot-aux($k$) by a correct process $i$, then $s_1$ satisfies the stability condition for some set $S$ in some round $r$.
\end{observation}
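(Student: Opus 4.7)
My plan is to case-analyze on which line of snapshot-aux($k$) actually produces the return value $s_1$. There are three possible exit points: line~\ref{line:return_snapaux}, line~\ref{line:ret_arr1}, and line~\ref{line:ret_arr2}. The first is a ``direct'' return where $i$ witnesses stability itself; the latter two return a cached value from minimum-saved and must be handled together by tracing the provenance of that cache.

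For the direct return at line~\ref{line:return_snapaux}: the guard at line~\ref{line:requestedcondition} has just evaluated to true, so there exists a round $s$ and a set $S$ with $|S|\geq f+1$ such that $seen_i[j][s]=senders_i$ for every $j\in S$. Set $S':=senders_i$. The update at line~\ref{line:updateseen} is only executed after the filter at line~\ref{line:check_contains} has verified $jsenders\subseteq senders_i$, so $seen_i[j][s]$ is exactly the union of jsenders sets broadcast by $j$ in its round-$1$ through round-$s$ messages. Hence for each $p\in S$ that union equals $S'$. Moreover $S\subseteq S'$, since at line~\ref{line:add_to_starts_local} each process includes itself in its senders set and therefore broadcasts a round-$r$ set containing itself. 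Finally, by Invariant~\ref{inv_sup}, $collect_i$ at the moment of the check is the supremum of the collect arrays in the start messages sent by members of $senders_i=S'$. All three clauses of Definition~\ref{def:stability} are satisfied with witnesses $r:=s$ and $S$, so $s_1=collect_i$ satisfies stability.

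For the cached returns at lines~\ref{line:ret_arr1} and~\ref{line:ret_arr2}: the value $s_1$ comes from minimum-saved, and is the infimum of the valid entries of $savesnap[auxnum]$. Each such valid entry was originally written by a correct process (otherwise its proof could not pass validation), either at line~\ref{line:storesnap2} as the writer's own return value $collect_j$, or at line~\ref{line:storesnap1} as the infimum of previously-stored valid entries. The plan is to show that every valid $savesnap[auxnum]$ entry equals a single common value $s^*$, and that $s^*$ was first produced by some correct process $j$ at line~\ref{line:storesnap2} after $j$'s own check at line~\ref{line:requestedcondition} succeeded. The case-1 argument, applied to $j$, then supplies witnesses $r$ and $S$ for $s^*=s_1$.

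The main obstacle is the auxiliary claim that all valid $savesnap[auxnum]$ entries coincide: a priori, two correct processes might independently witness stability at line~\ref{line:requestedcondition} with different sender sets, and the infimum of their collects would in general not satisfy the supremum clause of Definition~\ref{def:stability} relative to any single $S'$. The argument here is the same one that motivates the algorithm's $2f+1$ resilience: two stability witnesses with $|S_1|,|S_2|\geq f+1$ must share at least one process $q$ in the intersection of their round-$r$ message senders, and the reliable broadcast instance opened at line~\ref{line:new_bc} prevents $q$ from equivocating on its jsenders set for any given round. Propagating this agreement down to round $0$ via the $jsenders\subseteq senders$ filter forces the two $senders$ sets, and hence the two collects (as suprema of the same start messages, by Invariant~\ref{inv_sup}), to agree. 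Once this coincidence claim is in hand, the minimum-saved infimum is an infimum of copies of a single stability-witnessed value and the observation follows.
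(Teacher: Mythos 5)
Your first case (the direct return at \cref{line:return_snapaux}) is correct and follows the paper's own argument. The gap is in your second case, specifically the auxiliary claim it hinges on: that all valid $savesnap[auxnum]$ entries coincide with a single value $s^*$. This claim is false. Two correct processes can each satisfy the guard at \cref{line:requestedcondition} in the same snapshot-aux instance but in \emph{different} rounds $r_1<r_2$, with sender sets $S_1'\subsetneq S_2'$; and since the re-check of minimum-saved and the write at \cref{line:storesnap2} are not atomic, both can see an empty cache and then store two distinct valid entries, the later-round one possibly strictly larger. Your non-equivocation argument only closes the case of two witnesses in the \emph{same} round: the common process $q$ in the intersection of the two $(f+1)$-sized sets cannot equivocate on any single round's $jsenders$, but the union of $q$'s $jsenders$ over rounds $1,\dots,r_2$ is in general a strict superset of the union over rounds $1,\dots,r_1$. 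That yields only $S_1'\subseteq S_2'$ and hence comparability of the two collects --- exactly the second bullet in the proof of \Cref{same_aux} --- not equality. (A smaller inaccuracy: a valid entry need not have been \emph{written} by a correct process; a Byzantine process can store a value with a valid proof. This is harmless precisely because the proof, not the writer, is what certifies the value.)

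The observation still holds, but the paper reaches it more directly: an entry is accepted only if its attached proof is valid, and a valid proof is by construction a set of $f+1$ same-round messages together with the corresponding start messages, i.e., it certifies the stability condition for that stored value itself, with no need to trace provenance back to a correct writer or to a common $s^*$. What remains --- and what your infimum worry correctly identifies --- is that minimum-saved returns the coordinate-wise infimum of the valid entries, which a priori need not equal any of them. This is resolved not by coincidence but by comparability: since every valid entry satisfies the stability condition, any two are comparable by the \Cref{same_aux} argument, so the set of valid entries is totally ordered and its infimum is its minimum element, which is itself a stability-certified entry. Replace your coincidence claim with this comparability step and your case 2 goes through.
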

\begin{proof}
Consider two cases.
First, if $i$ returns $s_1$ at line~\ref{line:return_snapaux}, then the condition is satisfied for $s_1$ with the round $s$ that satisfies the condition at line~\ref{line:requestedcondition} and the set of $f+1$ processes for which the condition at line~\ref{line:requestedcondition} holds.
$S'$ is the set in $senders_i$ at the time the condition is satisfied. Since messages are delivered in order, we get that $S' \supseteq S$. 
Because the return value is $collect_i$, (3) follows from~\Cref{inv_sup}.

Second, if $i$ adopts a saved snapshot $s_1$ with a proof and returns at line~\ref{line:ret_arr1} or at line~\ref{line:ret_arr2}, then the proof contains $f+1$ messages from some round $r$ and corresponding start messages satisfying the stability condition.
\end{proof}




\begin{lemma}
\label{same_aux}
For a given $k$, Let $i,j$ be two correct processes that return $s_i,s_j$ from snapshot-aux($k$).
Then $s_i\leq s_j$ or $s_i>s_j$.

\end{lemma}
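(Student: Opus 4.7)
The plan is to apply~\Cref{newob} to both returned snapshots, use a quorum-intersection argument to pick a common witness process, and then leverage reliable broadcast integrity to show the witnessing sets are nested. By~\Cref{newob}, there exist rounds $r_i, r_j$ and sets $S_i \subseteq S_i'$, $S_j \subseteq S_j'$ with $|S_i|, |S_j| \geq f+1$ such that the stability condition (\Cref{def:stability}) holds for $s_i$ with $(r_i, S_i, S_i')$ and for $s_j$ with $(r_j, S_j, S_j')$. Since two sets of size $f+1$ in a system of $n \leq 2f+1$ processes must intersect, I can fix a process $p \in S_i \cap S_j$.

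Next, I would exploit that reliable broadcast integrity uniquely determines each round-$r$ message from $p$ within the dedicated reliable broadcast instance created at~\cref{line:new_bc}, regardless of whether $p$ is correct or Byzantine. Consequently, the union of all jsenders sets broadcast by $p$ in rounds $1$ through $r$ is an objective quantity, which I denote $U_p^r$. Instantiating clause (2) of the stability condition at $p$ gives $S_i' = U_p^{r_i}$ and $S_j' = U_p^{r_j}$. Assuming WLOG $r_i \leq r_j$, the containment $U_p^{r_i} \subseteq U_p^{r_j}$ immediately yields $S_i' \subseteq S_j'$.

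Finally, clause (3) of the stability condition expresses $s_i$ and $s_j$ as the coordinate-wise suprema of the collect arrays carried in the start messages from the members of $S_i'$ and $S_j'$, respectively. Reliable broadcast integrity also forces each sender's round-$0$ start message to be unique, so both suprema are taken over the same collect arrays on the common indices, and monotonicity of supremum with respect to set containment gives $s_i \leq s_j$. The main subtlety I expect to navigate is exactly the need to treat $S_i'$ and $S_j'$, which are locally witnessed at different processes, as referring to the same objective sets $U_p^{r_i}$ and $U_p^{r_j}$; this identification is precisely what reliable broadcast's uniqueness guarantee delivers within a single snapshot-aux($k$) instance.
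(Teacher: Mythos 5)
Your proposal is correct and follows essentially the same route as the paper's proof: both apply Observation~\ref{newob} to each returned snapshot, intersect the two $(f+1)$-sized witness sets to obtain a common process $p$, invoke reliable broadcast's non-equivocation to make the cumulative $jsenders$ unions at $p$ well-defined, and conclude $S_i'\subseteq S_j'$ and hence $s_i\leq s_j$ by monotonicity of the supremum. The only (cosmetic) difference is that you merge the paper's two cases $r_1=r_2$ and $r_1<r_2$ into a single WLOG $r_i\leq r_j$ argument.
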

\begin{proof}

By~\Cref{newob}, $s_i$ satisfies the stability condition for some set $S_1$ in some round $r_1$. Let $S_1'$ be the set guaranteed from the definition.
Also by~\Cref{newob}, $s_j$ satisfies the stability condition and some set $S_2$ in some round $r_2$. 
Let $S_2'$ be the set guaranteed from the definition.

Since $|S_1|\geq f+1$ and $|S_2|\geq f+1$, there is at least one process $p\in S_1\cap S_2$.
Due to reliable broadcast, $p$ cannot equivocate with the set of processes $jsenders$ sent in each round of snapshot-aux($k$).

\begin{itemize}
    \item If $r_1=r_2$:
    By property (2) of~\Cref{def:stability} $S_1'=S_2'$, and by (3) $s_i=s_j$.

    \item If $r_1\neq r_2$:
    Assume WLOG $r_1<r_2$.
    Since the union of all jsenders sets sent in $p$'s messages in rounds 1 to $r_2$ is a superset of those sent in    rounds 1 to $r_1$, $S_2\supseteq S_1$ and then by (3) $s_j\geq s_i$.
    
\end{itemize}

    

    
    

\end{proof}

\begin{lemma}
\label{lemma:diff_an}
Let $i, j$ be two correct processes returning $s_i,s_j$ resp. from snapshot-aux with $auxnum=k$, such that $s_j>s_i$. 
Then when $i$ begins any snapshot-aux$_i(k')$ for $k'>k$, $collect_i>s_j$.
\end{lemma}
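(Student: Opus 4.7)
The plan is to exhibit a correct process $p^\star$ whose $collect$ register reaches $\ge s_j$ at some time strictly before $i$ enters snapshot-aux($k'$); the update-collect sweep at~\cref{line:do_collect2} then propagates this into $collect_i$.

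First I would apply~\Cref{newob}: since $j$ returns $s_j$, $s_j$ satisfies the stability condition for some round $r_j$, a set $S$ with $|S|\ge f+1$, and a superset $S'\supseteq S$ whose start-message collects have supremum $s_j$. Because at most $f$ processes are Byzantine, $S$ contains some correct $p^\star$. By part (2) of~\Cref{def:stability} the union of jsenders in $p^\star$'s round-$1,\dots,r_j$ messages equals $S'$; since $senders_{p^\star}$ is cumulative and grows only via start-message deliveries at~\cref{line:up_senders_1}, this forces $senders_{p^\star}\supseteq S'$ by the time of $p^\star$'s round-$r_j$ broadcast. Each such start-message delivery invokes update-collect at~\cref{line:invoke_update_collect}, so~\Cref{inv_sup} yields $collect_{p^\star}\ge s_j$ by that broadcast, and~\Cref{lemma:mono_of_collect} preserves the inequality thereafter.

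It remains to show this broadcast precedes $t_5$, the time $i$ begins snapshot-aux($k'$). Let $t_{\mathrm{save}}$ be the earliest time any process writes $s_i$ (or any valid snapshot-aux($k$) return value no larger than $s_i$) into $savesnap[k]$; since $i$ either saves $s_i$ itself or inherits it from a prior save, $t_{\mathrm{save}}$ precedes $i$'s return from snapshot-aux($k$), hence $t_{\mathrm{save}}<t_5$. I would case-split on how $s_j$ first enters $savesnap[k]$. If the first writer of $s_j$ is correct ($j$ itself via~\cref{line:storesnap2} or some other $\omega$), then its stability re-check at~\cref{line:ret_arr2} returned $\bot$---otherwise it would have returned the infimum rather than saving---so its stability trigger, which required delivering $p^\star$'s round-$r_j$ message, preceded $t_{\mathrm{save}}$. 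If the first writer is a Byzantine $b$, the valid proof in $b$'s save contains $f+1$ signed round-$r_j$ messages, so $p^\star$'s round-$r_j$ broadcast precedes $t_b$; moreover $t_b$ precedes $j$'s minimum-saved read returning $s_j$ (else $j$ would not inherit from $b$), and that read precedes $t_{\mathrm{save}}$ (else $s_i$ would be in the infimum and $j$ would inherit $s_i$ rather than $s_j$). Either way $p^\star$'s broadcast is before $t_{\mathrm{save}}<t_5$, so $collect_{p^\star}\ge s_j$ at $t_5$, and update-collect at~\cref{line:do_collect2} lifts $collect_i$ to $\ge s_j$.

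The main obstacle is the Byzantine-origin sub-case: one cannot directly invoke the ``empty minimum-saved at stability'' argument on the savesnap writer, because that writer need not be correct. The fix is to pivot from the writer to the correct signer $p^\star$ inside the proof, and then chain the constraints that $b$'s save is visible to $j$ and that $j$'s inheriting $s_j$ (rather than $s_i$) forces $s_i$ to be unsaved at $j$'s read, in order to bound $p^\star$'s broadcast before $t_{\mathrm{save}}$ and hence before $t_5$.
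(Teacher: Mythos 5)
Your argument is correct in substance and rests on the same two pillars as the paper's proof: (i) a correct signer $p^\star$ of the $f{+}1$-certificate for $s_j$ has $collect_{p^\star}\geq s_j$ by the time it broadcasts its round-$r_j$ message (via \Cref{inv_sup} and \Cref{lemma:mono_of_collect}), and (ii) the infimum computed in minimum-saved forces the certification of $s_j$ to predate the saving of $s_i$, since otherwise $j$ would have obtained a value $\leq s_i<s_j$. The organization differs: the paper fixes the time $t_1$ at which the stability condition for $s_j$ holds, splits on whether $i$ enters snapshot-aux($k'$) before or after $t_1$, and kills the ``before'' branch by contradiction ($j$'s mandatory minimum-saved check after $t_1$ would see $s_i$); you instead build a forward temporal chain ($p^\star$'s broadcast $<t_{\mathrm{save}}<t_5$) and conclude directly. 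Your version has the merit of making explicit a step the paper leaves implicit, namely why a correct process with $collect\geq s_j$ exists once the stability condition holds. Two loose ends to tidy. First, you fix $p^\star$ once, from the certificate that \Cref{newob} attaches to $j$'s return, but your case analysis needs $p^\star$ to be a correct signer of the \emph{specific} certificate whose creation time is being bounded (the first correct writer's own trigger set, or the proof $\sigma$ inside the Byzantine writer's entry); these need not coincide with $j$'s certificate, so $p^\star$ should be re-chosen per case --- harmless, since $collect\geq s_j$ holds for any correct signer of any valid certificate for $s_j$. Second, your dichotomy on ``who first wrote $s_j$ into savesnap'' does not quite align with how $j$ obtains $s_j$: if the first writer is Byzantine but $j$ returns $s_j$ through its own stability trigger at \cref{line:return_snapaux}, the Case-B inheritance chain does not apply and you need the Case-A argument applied to $j$ itself. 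Both repairs use machinery you already have, so this is a presentational fix rather than a missing idea.
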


\begin{proof}

Since $j$ is correct, by~\Cref{newob}, $s_j$ satisfies the stability condition.
Let $t_1$ be a time when the condition is satisfied.
At time $t_1$, there is at least one correct process $l$ such that $collect_l\geq s_j$.
We show that either 
(1) $j$ does not return $s_j$ or 
(2) $i$ begins snapshot-aux$_i(k')$ with $collect_i>s_j$.
If $i$ begins snapshot-aux$_i(k')$ after $t_1$, then when it updates its collect at lines~\ref{line:do_col2}--\ref{line:do_collect2}, it reads the values in $collect_l$. By~\Cref{lemma:mono_of_collect}, $collect_l$ is greater than or equal to its value at time $t_1$. Thus, we get that $collect_i\geq collect_l\geq s_j$ and (2) holds. 
Otherwise, $i$ saves $s_i$ at line~\ref{line:updatesavesnap} before starting snapshot-aux($k'$), which is before time $t_1$. Between time $t_1$ and the time it returns $s_j$, j checks stored snapshots (at line~\ref{line:checkstored1}). When it does so, $j$ reads $s_i$, and since $s_j>s_i$ and $j$ returns the minimal array it sees, (1) holds. 

\end{proof}

\begin{lemma}
\label{lemma:snap_geq_collect}
If snapshot-aux$_i(k)$ of a correct process $i$ returns $s_i$, there is a correct process $j$ s.t. 
$j$ invoked snapshot-aux$_j(k)$ and $s_i\geq c_j$, where $c_j$ is the value of $collected_j$ after the collection at line~\ref{line:do_collect2} in snapshot-aux($k$) at $j$.
\end{lemma}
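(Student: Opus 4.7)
The plan is to reduce the statement to the stability condition supplied by Observation~\ref{newob} and then to pick a correct witness out of the set guaranteed by that condition. By Observation~\ref{newob}, since $s_i$ is returned by a correct process from snapshot-aux$(k)$, $s_i$ satisfies the stability condition (\Cref{def:stability}) for some round $r$ and a set $S$ of size at least $f+1$, together with a superset $S'\supseteq S$ such that $s_i$ is the supremum of the collect arrays carried by the start messages broadcast by members of $S'$ in that snapshot-aux$(k)$ instance.

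My next step is a simple counting argument: since $|S'|\ge |S|\ge f+1$ and at most $f$ processes are Byzantine, $S'$ contains at least one correct process; call it $j$. Since $j$ is correct and its (signed) start message for this snapshot-aux$(k)$ instance is part of the stability witness, unforgeable signatures combined with the fact that a correct process only emits a round-$0$ message for instance $k$ from within its own snapshot-aux$_j(k)$ call (at \cref{line:round_0_bc}) forces $j$ to have invoked snapshot-aux$_j(k)$. This gives the existence claim of the lemma.

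To compare $s_i$ with $c_j$, I would inspect the code of snapshot-aux at lines~\ref{line:do_col2}--\ref{line:round_0_bc}: process $j$ performs the update-collect loop and only then broadcasts $\langle collect_j\rangle_j$, so the payload of $j$'s start message is precisely the value $c_j$ described in the statement (the value of $collect_j$ immediately after the collection at \cref{line:do_collect2}). Since $j\in S'$ and $s_i$ is the supremum over start-message collects of members of $S'$ by property~(3) of \Cref{def:stability}, we obtain $s_i\ge c_j$ coordinate-wise.

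The only subtlety I anticipate is that the return value $s_i$ need not have been computed by $i$ directly at \cref{line:return_snapaux}: $i$ might have adopted a cached snapshot via \cref{line:ret_arr1} or \cref{line:ret_arr2}. However, Observation~\ref{newob} already treats both cases uniformly by reading the attached proof as a stability-condition witness, so the argument above applies without modification. Hence no separate case analysis on the return path is necessary, and the proof reduces to the three steps outlined.
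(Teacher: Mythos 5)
Your proof is correct, but it is organized differently from the paper's. The paper splits on the return path: when snapshot-aux$_i(k)$ returns its own $collect_i$ at~\cref{line:return_snapaux}, it takes $j=i$ directly and invokes monotonicity of $collect_i$ (\Cref{lemma:mono_of_collect}) to get $s_i\geq c_i$; only in the cached-return case (\cref{line:ret_arr1} or~\cref{line:ret_arr2}) does it run the counting argument, and there it counts over the set $S$ of $f+1$ processes whose $seen$ entries match $senders$, then argues that the chosen correct $j$ lies in the senders set because $j$ includes itself in every $jsenders$ it broadcasts. You instead route both cases through \Cref{newob} and \Cref{def:stability}, pick the correct witness directly from $S'$ (which works since $S'\supseteq S$ and $|S|\geq f+1$), and read off $s_i\geq c_j$ from property~(3). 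This is a legitimate and arguably cleaner decomposition: it avoids the case split entirely and does not need the separate "$j$ adds itself to $senders_j$" step, at the cost of leaning on the stability-condition abstraction even in the easy case where $i$ itself is an adequate witness. Both arguments rest on the same underlying facts (a correct process only emits a round-$0$ message for instance $k$ from inside its own snapshot-aux$_j(k)$, and the payload of that message is exactly $c_j$), so I see no gap.
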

\begin{proof}
If snapshot-aux$_i(k)$ returns at line~\ref{line:return_snapaux}, then $i$ returns $collect_i$ and by~\Cref{lemma:mono_of_collect}, $s_i=collect_i$ is greater than or equal to its value after the collection at line~\ref{line:do_collect2} so the lemma holds with $i=j$.
Otherwise, snapshot-aux$_i(k)$ returns $s_i$ at line~\ref{line:ret_arr1} or at line~\ref{line:ret_arr2} and $s_i$ is an array saved in savesnap with a proof $\sigma$ signed by process $p$.
Since $i$ validates $s_i$, there was a round $r$ such that $|\{j |$ $seen_p[j][s]=senders_p\}|\geq f+1$. Thus, there was at least one correct process $j$ in this set. Since $j$ adds itself to $senders_j$ (\cref{line:add_to_starts_local}), $senders_j$ is broadcast by $j$ at every round (\cref{line:bc_rounds}), and it is the set added to $seen$, the array $c_j$ sent in $j$'s start message is reflected in $s_i$. This set is exactly the value of $collected_j$ after the collection at line~\ref{line:do_collect2} in snapshot-aux($k$) at $j$, and hence $s_i\geq c_j$.
\end{proof}

\begin{lemclone}{lemma:concurrent_snaps}
If two snapshot operations invoked by correct processes return $s_i$ and $s_j$, then $s_j\geq s_i$ or $s_j<s_i$. 
\end{lemclone}

\begin{proof}
By the code, $s_i$ is the return value of some snapshot-aux$_i(k_i)$ and $s_j$ is the return value of some snapshot-aux$_j(k_j)$.
WLOG, $k_i\geq k_j$. 

\begin{itemize}
    \item If $k_i=k_j$, the proof follows from~\Cref{same_aux}.


    \item If $k_i>k_j$:
    By~\Cref{lemma:snap_geq_collect}, there is a correct process $l$ that invoked snapshot-aux$_l(k_i)$, collected $c_l$ at line~\ref{line:do_collect2} of snapshot-aux$_l(k_i)$ (where $c_l$ is the value of $collected_l$ at that time), and $s_i\geq c_l$.
    Let $s_l$ be the return value of snapshot-aux$_l(k_j)$ (note that $l$ invokes snapshot-aux with increasing auxnums, so such a value exists). Consider two cases. First, if $s_j>s_l$, then by~\Cref{lemma:diff_an}, $s_j\leq c_l$. Thus, $s_j\leq c_l\leq s_i$ and the lemma follows.
    Otherwise, $s_j\leq s_l$.
    At the end of snapshot-aux$_l(k_j)$ $collected_l\geq s_l$ because either the return value is $collected_l$, or $s_l$ is reflected in collect by the end of line~\ref{line:up_collect_w_save}. Due to the monotonicity of collects (\Cref{lemma:mono_of_collect}), $s_l\leq c_l$.
    We conclude that $s_j\leq s_l\leq c_l \leq s_i$, as required.
    
\end{itemize}

\end{proof}

\begin{lemclone}{sees}
If update1 by process $i$ precedes update2 by process $j$ and a snapshot operation $snap$ by a correct process sees update2, then $snap$ sees update1 as well.
\end{lemclone}
\begin{proof}
Let $s$ be the return value of a snapshot that sees update2.
By~\Cref{newob}, $s$ is the supremum of $collect$ arrays sent at line~\ref{line:round_0_bc}.
If $s$ sees update2, by~\Cref{lemma:tschangedinupdate}, it means that $s$ reflects $collect_j$ after line~\ref{line:update_collect1} of update2. After, $j$ performed line~\ref{line:collect_in_up} and update1 was reflected in $collect_j$.
Hence, $s$ sees update1 as well.
\end{proof}

We now prove the liveness of our snapshot algorithm.

\begin{lemma}
\label{lemma:liveness_aux}
Every correct process that invokes snapshot-aux(auxnum) eventually returns.
\end{lemma}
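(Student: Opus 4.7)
The plan is to establish that at least one correct process eventually returns from \emph{snapshot-aux($k$)}; the remainder then follows from the helping mechanism, since once a correct process writes a valid \textit{(snap,proof)} pair to $savesnap_i[k]$ (at~\cref{line:storesnap1} or~\cref{line:storesnap2}), every other correct process reads it at~\cref{line:check_cache} and returns at~\cref{line:ret_arr1} or~\cref{line:ret_arr2}.

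First, I would appeal to~\Cref{assumption:inf_snaps} and the monotone increase of $auxnum$ at every correct process to argue that every correct process eventually enters \emph{snapshot-aux($k$)} and invokes the dedicated reliable-broadcast instance. By the liveness of reliable broadcast already proved, every round-$0$ message broadcast by a correct process is eventually delivered by every correct process; by its agreement property, a Byzantine round-$0$ message is delivered by all correct processes or by none. Since $senders_i$ only grows and is bounded by $[n]$, and since the set of Byzantine senders ever included is the same at every correct process, there is a common limit set $S^\ast$ (containing all correct processes) at which every correct process's $senders$ stabilizes.

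Next, I would address round progress. Two effects must be handled: (a) a delivery returning $\bot$ at~\cref{line:ignore_bot} simply skips that process and the round-robin proceeds; (b) the check $jsenders\nsubseteq senders_i$ at~\cref{line:check_contains} skips the message without incrementing $rts_i[p]$, so the same message will be retried in the next pass. Both effects are benign: every correct process $j$ broadcasts each round-$r$ message with payload $senders_j\subseteq S^\ast$, so once $senders_i=S^\ast$ (which, by the previous step, eventually holds), all such messages are accepted on retry. Since at least $f{+}1$ correct processes broadcast round-$r$ messages, $i$ eventually delivers $f{+}1$ valid round-$r$ messages and advances to round $r{+}1$ at~\cref{line:next_round}; hence correct processes reach arbitrarily large rounds.

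Finally, let $r^\ast$ be a round past which every correct process $j$ has $senders_j=S^\ast$ and broadcasts $S^\ast$ as the $jsenders$ payload in every round $\geq r^\ast$. For any correct $i$ that has reached some round $s>r^\ast$ and has delivered $j$'s round-$s$ message, the accumulated union $seen_i[j][s]$ equals $S^\ast=senders_i$. With at least $f{+}1$ correct processes eventually contributing such messages, the stability condition at~\cref{line:requestedcondition} fires at $i$ and it returns at~\cref{line:return_snapaux}. The main obstacle is the interaction between the subset check at~\cref{line:check_contains} and the per-process FIFO round-robin: the argument must rule out the possibility that $i$ is permanently blocked on a message whose $jsenders$ exceeds $senders_i$, which is precisely where reliable broadcast's agreement together with the monotonicity of $senders_i$ is used to guarantee that every process ever placed in any correct $senders$ is in every correct $senders$ within finite time.
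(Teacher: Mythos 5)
Your proposal follows essentially the same route as the paper's proof: the $senders$ sets of correct processes stabilize to a common set by the agreement property of reliable broadcast, rounds keep advancing despite $\bot$ deliveries and the subset check, the stability condition eventually fires at some correct process, and the helping mechanism via $savesnap$ lets every other correct process return; your treatment of the round-robin/subset-check interaction is in fact more explicit than the paper's. The one thing to add is the paper's opening induction on $auxnum$: your claim that every correct process eventually enters snapshot-aux($k$) presupposes that all correct processes return from snapshot-aux($k'$) for $k'<k$, which is exactly the lemma for smaller indices, so the whole argument must be framed as an induction on $k$.
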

\begin{proof}

Assum by induction on auxnum that all snapshot-aux instances with $k'<k$ (if any) have returned at all correct processes. Then, for auxnum=$k$, all correct processes initiate reliable broadcast instances and broadcast $\lr{0,c}$.
This is because all correct process invoke snapshot infinitely often.
Since all messages by correct processes are eventually delivered, they all eventually complete line~\ref{line:beginarounds} in each round.
Because $|senders|$ is bounded, eventually the $senders$ sets of all correct process stabilize, and due to reliable broadcast, they contain the same set of processes for all correct processes. Thus, there is be a round $r$ for which the condition at line~\ref{line:requestedcondition} is satisfied.
Therefore, at least one correct process returns from snapshot-aux at line~\ref{line:return_snapaux} (if it did not return sooner). Before returning, it updates its savesnap register at line~\ref{line:updatesavesnap}.
If it returns at line~\ref{line:ret_arr1} or at line~\ref{line:ret_arr2} it also updates its savesnap register at line~\ref{line:storesnap1}.
Every other correct process that has not yet returned from snapshot-aux will read the updated savesnap in the next while iteration and will return at line~\ref{line:ret_arr1}.
\end{proof}

\begin{lemclone}{liveness}
Every correct process that invokes some operation eventually returns.
\end{lemclone}
\begin{proof}
If a correct process $i$ invokes an update operation then by the code it returns in constant time.
If $i$ invokes a snapshot operation at time $t$, let $c$ be the collected array at line~\ref{line:do_collect}. Additionally, let $k$ be the maximum $auxnum$ of any snapshot-aux operation that was initiated by some process before time $t$. 
By~\Cref{lemma:liveness_aux}, all snapshot-aux invocations eventually return.
At snapshot-aux$(k+1)$, all correct processes see $c$ at lines~\ref{line:do_col2}--\ref{line:do_collect2} when they update their collect. Since the return value is the supremum of $f+1$ collect arrays, it is guaranteed that when $i$ executes snapshot-aux$(k+1)$, the returned value $res$ will satisfy $res\geq c$.
\end{proof}

\end{appendices}

\end{document}